\documentclass[%
  reprint,
]{revtex4-2}

\usepackage{amsmath}
\usepackage{amssymb}
\usepackage{bm,bbm}

\usepackage{graphicx}


\usepackage[bookmarks=false]{hyperref}
\hypersetup{colorlinks=true,citecolor=blue,linkcolor=red,%
urlcolor=blue,pdfstartview=FitH,bookmarksopen=true}



\usepackage[T1]{fontenc}
\usepackage[osf,sc]{mathpazo}
\usepackage{dsfont}

\usepackage{amsthm}
\newtheoremstyle{note}
{3pt}
{3pt}
{}
{}
{\itshape}
{:}
{.5em}
{}

\newtheorem{theorem}{Theorem}

\newtheorem{lemma}[theorem]{Lemma}

\DeclareMathOperator{\Tr}{Tr}
\DeclareMathOperator{\Var}{Var}

\DeclareMathOperator{\rank}{rank}

\newcommand{\bra}[1]{\langle #1\rvert}
\newcommand{\ket}[1]{\lvert #1\rangle}

\newcommand{\braket}[2]{\langle #1\vert #2\rangle}
\newcommand{\abs}[1]{\lvert #1\rvert}
\newcommand{\norm}[1]{\lVert #1\rVert}
\newcommand{\vect}[1]{\bm{#1}}
\newcommand{\md}{\mathrm{d}}
\newcommand{\me}{\mathrm{e}}

\newcommand{\Cl}{\mathrm{cl}}

\newcommand{\I}{\mathds{1}}

\newcommand{\maxover}[1][]{\underset{#1}{\mathrm{max}}}
\newcommand{\minover}[1][]{\underset{#1}{\mathrm{min}}}
\newcommand{\subto}{\mathrm{subject~to}}

\newcommand{\Sep}{\mathrm{SEP}}

\newcommand{\vm}{\bm{m}}
\newcommand{\vp}{\bm{p}}

\newcommand{\vv}{\bm{v}}
\newcommand{\vu}{\bm{u}}
\newcommand{\vx}{\bm{x}}
\newcommand{\vy}{\bm{y}}

\newcommand{\cH}{\mathcal{H}}

\newcommand{\cM}{\mathcal{M}}
\newcommand{\cN}{\mathcal{N}}
\newcommand{\cO}{\mathcal{O}}

\newcommand{\cT}{\mathcal{T}}

\newcommand{\dN}{\mathds{N}}
\newcommand{\dR}{\mathds{R}}

\newcommand{\bmid}{\;\big|\;}

\begin{document}

\title{Optimal Entanglement Certification from Moments of the Partial Transpose}

\author{Xiao-Dong Yu}
\affiliation{Naturwissenschaftlich-Technische Fakult\"at, Universit\"at Siegen,
Walter-Flex-Stra{\ss}e 3, D-57068 Siegen, Germany}

\author{Satoya Imai}
\affiliation{Naturwissenschaftlich-Technische Fakult\"at, Universit\"at Siegen,
Walter-Flex-Stra{\ss}e 3, D-57068 Siegen, Germany}

\author{Otfried G\"uhne}
\affiliation{Naturwissenschaftlich-Technische Fakult\"at, Universit\"at Siegen,
Walter-Flex-Stra{\ss}e 3, D-57068 Siegen, Germany}

\date{\today}

\begin{abstract}
 For the certification and benchmarking of medium-size quantum devices 
 efficient methods to characterize entanglement are needed. In this context, it 
 has been shown that locally randomized measurements on a multiparticle quantum 
 system can be used to obtain valuable information on the so-called moments of 
 the partially transposed quantum state. This allows one to infer some 
 separability properties of a state, but how to use the given information in an 
 optimal and systematic manner has yet to be determined. We propose two general 
 entanglement detection methods based on the moments of the partially 
 transposed density matrix. The first method is based on the Hankel matrices 
 and provides a family of entanglement criteria, of which the lowest order 
 reduces to the known $p_3$-PPT criterion proposed in A. Elben \textit{et al.}, 
 [Phys.~Rev.~Lett. 125, 200501 (2020)]. The second method is optimal and gives 
 necessary and sufficient conditions for entanglement based on some moments of 
 the partially transposed density matrix.
\end{abstract}

\maketitle

\textit{Introduction.---}%
%
Intermediate-scale quantum devices involving a few dozen qubits  are  
considered a stepping stone toward the ultimate goal of achieving 
fault-tolerant quantum computation \cite{Preskill2018}. For such devices, the 
standard method of tomography is no longer feasible for gauging the performance 
in actual experiments \cite{Paris.Rehacek2004}. As a result, efficient and 
reliable characterization methods of such multiparticle systems are 
indispensable for current quantum information research 
\cite{Eisert.etal2020,Kliesch.Roth2021}. As entanglement is a key ingredient in 
quantum computation and other quantum information processing tasks, many 
efforts have been devoted to its characterization and quantification 
\cite{Horodecki.etal2009, Guehne.Toth2009, Friis.etal2019}.

If an experiment aims at producing a specific quantum state with few particles, 
entanglement witnesses or Bell inequalities provide mature tools for 
entanglement detection. For larger and noisy systems, however, these methods 
require significant measurement efforts; moreover, some of the standard 
constructions of witnesses are not very powerful. To overcome this, methods 
using locally randomized measurements have been put forward.  In these schemes, 
one performs on the particles measurements in random bases and determines the 
moments from the resulting probability distribution. It was noted early that 
this approach allows one to detect entanglement 
\cite{Tran.etal2015,Tran.etal2016} or to evaluate the moments of the density 
matrix \cite{vanEnk.Beenakker2012}. Recently, this approach turned into the 
center of attention and found experimental applications. For instance, it was 
shown that with these methods entropies can be estimated 
\cite{Elben.etal2019,Brydges.etal2019}, different forms of multiparticle 
entanglement can be characterized 
\cite{Ketterer.etal2019,Ketterer.etal2020,Ketterer.etal2020b,Knips.etal2020}, 
and also bound entanglement as a weak form of entanglement can be detected 
\cite{Imai.etal2021}. Especially, many efforts have been devoted to verify the 
positive partial transpose (PPT) condition \cite{Peres1996} from the moments of 
the randomized measurements \cite{Gray.etal2018,Elben.etal2020,Zhou.etal2020}.  

To explain this approach, let $\rho_{AB}$ be a quantum state in a bipartite 
quantum system $\cH_A\otimes\cH_B$, then the PPT criterion states that for any 
separable state $\rho_{AB}^{T_A}\ge 0$, where $^{T_A}$ denotes the partial 
transposition on subsystem $\cH_A$.  For a given quantum state $\rho_{AB}$, it 
is straightforward to check whether the PPT criterion is violated, and if so, 
the state must be entangled. However, in actual experiments, the quantum state 
is unknown, unless the resource-inefficient quantum state tomography is 
performed.  Recently, researchers found that the PPT condition can also be 
studied by considering the so-called partial transpose moments (PT-moments)
\begin{equation}
  p_k:=\Tr\Big[ \big( \rho_{AB}^{T_A} \big)^k \Big],
  \label{eq:PTmoment}
\end{equation}
which can be efficiently measured from randomized measurements 
\cite{Elben.etal2020,Huang.etal2020}.  To see the basic idea behind the 
PT-moment-based entanglement detection, suppose that we know all the PT-moments 
$\vp=(p_0,p_1,p_2,\dots,p_d)$, where $d=d_Ad_B$ is the dimension of the global 
system $\cH_A\otimes\cH_B$. Then, all the eigenvalues of $\rho_{AB}^{T_A}$ can 
be directly calculated \cite{Ekert.etal2002}, from which we can verify whether 
the PPT criterion is violated. Hereafter, we always assume that $p_0=d$ and 
$p_1=1$, which are trivial, but included for convenience.

In practice, however, it is difficult, if not impossible, to measure all the 
moments of a quantum state. Hence, the problem turns to whether we can detect 
the entanglement from the moments of limited order.  The question whether 
knowledge of some moments allows to draw conclusions about the underlying 
probability distribution is indeed fundamental, and has appeared in quantum 
information theory before 
\cite{BohnetWaldraff.etal2017,Milazzo.etal2020,Shchukin.Vogel2005}. For the 
case of PT-moments, we formulate this problem as follows:

\noindent\textbf{PT-Moment Problem.} Given the PT-moments of order $n$, is 
there a separable state compatible with the data? More technically formulated, 
given the numbers  $\vp^{(n)}=(p_0,p_1,p_2,\dots,p_n)$, is there a separable 
state $\rho_{AB}$ such that $p_k=\Tr[(\rho_{AB}^{T_A})^k]$ for $k=0,1,\dots,n$?

In Ref.~\cite{Elben.etal2020} the PT-moment problem for $n=3$ was studied and 
a necessary (but not sufficient) condition was proposed, called the $p_3$-PPT 
criterion:
\begin{equation}
  \rho_{AB}\in\Sep~\Rightarrow~
  p_3\ge p_2^2,
  \label{eq:p3PPT}
\end{equation}
where $\Sep$ denotes the set of separable states.

In this work, we propose two systematic methods for solving the general 
PT-moment problem.  First, we build a connection between the PT-moment problem 
and the known moment problems in the mathematical literature.  This gives 
a relaxation of the PT-moment problem, resulting in a family of entanglement 
criteria, in which the $p_3$-PPT criterion is the lowest order. Second, we show 
that the $p_3$-PPT criterion is not sufficient for the PT-moment problem of 
order three.  By reformulating the PT-moment problem as an optimization 
problem, we derive an explicit necessary and sufficient criterion for $n=3$ and 
further generalize it to the case that $n>3$. At last, we illustrate the 
efficiency of our criteria with physically relevant examples, e.g., thermal 
states of condensed matter systems.

\textit{Relaxation to the classical moment problems.---}%
%
We start by relaxing the PT-moment problem and establishing a connection to the 
classical moment problems.  Here, instead of defining the classical moment 
problems with respect to the Borel measure on the real 
line~\cite{Akhiezer1965,Lasserre2010,Schmuedgen2017}, we rewrite them with 
quantum states and observables.

Given a quantum state $\sigma$ and an observable (Hermitian operator) $X$, the 
$k$-th moment is defined as
$m_k:=\Tr\big(\sigma X^k\big)$.
The moment problems ask the converse: given a sequence of moments, does there 
exist a quantum state $\sigma$ and an observable $X$ (with some restrictions) 
giving the desired moments? Albeit formulated in a quantum language, this 
scenario is essentially classical, since $\sigma$ can be taken diagonal in the 
eigenbasis of $X$.  Especially, the (truncated) Hamburger and Stieltjes moment 
problems are defined as follows:

\noindent\textbf{Hamburger Moment Problem.} Given the moments of order $n$, 
more precisely, $\vm^{(n)}=(m_0, m_1,m_2,\dots,m_n)$, is there a quantum state 
$\sigma$ and an observable $X$ such that $m_k=\Tr(\sigma X^k)$ for 
$k=0,1,\dots,n$?

\noindent\textbf{Stieltjes Moment Problem.} Given the moments of order $n$, 
more precisely, $\vm^{(n)}=(m_0, m_1,m_2,\dots,m_n)$, is there a quantum state 
$\sigma$ and a positive semidefinite observable $X$ such that $m_k=\Tr(\sigma 
X^k)$ for $k=0,1,\dots,n$?

Clearly, the only difference between these problems is that in the Stieltjes 
moment problem $X$ has to be positive semidefinite. We define the corresponding 
two sets of moments as
\begin{align}
  \cM_n&={\Big\{\vm^{(n)}\mid\Tr(\sigma X^k)=m_k,
  ~\sigma\ge 0,~X^\dagger=X\Big\}},
  \label{eq:momentSpace}\\
  \cM_n^+&={\Big\{\vm^{(n)}\mid\Tr(\sigma X^k)=m_k,
  ~\sigma\ge 0,~X\ge 0\Big\}}.
  \label{eq:momentSpacePSD}
\end{align}
Note that in the above definitions there is no restriction on the dimension of 
$\sigma$ and $X$. Also, since there is no bound on the eigenvalues of $X$, the 
sets $\cM_n$ and $\cM_n^+$ are not closed.

If we set $\sigma=\I$ and $X=\rho_{AB}^{T_A}$, the PT-moments 
$\vp^{(n)}=(p_0,p_1,\dots,p_n)$ defined by Eq.~\eqref{eq:PTmoment} always 
satisfy that $\vp^{(n)}\in\cM_n$, and furthermore the PT-moments given by the 
PPT states satisfy that $\vp^{(n)}\in\cM_n^+$. Hence, if we can characterize 
the set $\cM_n^+$, or the difference between $\cM_n^+$ and 
$\cM_n\setminus\cM_n^+$, we get a family of necessary conditions for the 
PT-moment problem. This is a relaxation, as in the definition of $\cM_n$ and 
$\cM_n^+$ more general $\sigma$ are allowed.

To proceed, we introduce the notion of Hankel matrices. The Hankel matrices 
$H_k(\vm)$ and $B_k(\vm)$ are $(k+1)\times (k+1)$ matrices defined by
\begin{equation}
  [H_k(\vm)]_{ij}=m_{i+j},~
  [B_k(\vm)]_{ij}=m_{i+j+1},~
  \label{eq:Hankel}
\end{equation}
for $i,j=0,1,\dots,k$. Hereafter, we will often suppress the argument ($\vm$ or 
$\vp$) in the notation when there is no risk of confusion. For example,
\begin{align}
  \hspace*{-1ex}
  H_1=&
  \begin{bmatrix}
    m_0 & m_1\\
    m_1 & m_2\\
  \end{bmatrix},~&
  B_1=&
  \begin{bmatrix}
    m_1 & m_2\\
    m_2 & m_3\\
  \end{bmatrix},
  \label{eq:H1B1}\\
  \hspace*{-1ex}
  H_2=&
  \begin{bmatrix}
    m_0 & m_1 & m_2\\
    m_1 & m_2 & m_3\\
    m_2 & m_3 & m_4\\
  \end{bmatrix},\!\!&
  B_2=&
  \begin{bmatrix}
    m_1 & m_2 & m_3\\
    m_2 & m_3 & m_4\\
    m_3 & m_4 & m_5\\
  \end{bmatrix}.
  \label{eq:H2B2}
\end{align}
From the definition of the Hankel matrices, one can prove the following result 
on the relations between $\cM_n,\cM_n^+$ and $H_k,B_k$; see 
Appendix~\ref{app:moment} for details.

\begin{lemma}
  (a) A necessary condition for $\vm^{(n)}=(m_0,m_1,\dots,m_n)\in\cM_n$ is that 
  $H_{\lfloor\frac{n}{2}\rfloor}\ge 0$.\\
  (b) A necessary condition for $\vm^{(n)}=(m_0,m_1,\dots,m_n)\in\cM_n^+$ is 
  that $H_{\lfloor\frac{n}{2}\rfloor}\ge 0$ and 
  $B_{\lfloor\frac{n-1}{2}\rfloor}\ge 0$.\\
  \label{lem:moment}
\end{lemma}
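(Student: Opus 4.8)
The plan is to certify positive semidefiniteness of each Hankel matrix by evaluating its associated quadratic form on an arbitrary test vector and recognizing the result as the trace of $\sigma$ against a manifestly positive semidefinite operator. The central device is to encode a vector $\vv=(v_0,v_1,\dots,v_k)\in\dC^{k+1}$ as a polynomial $P=\sum_{j=0}^{k}v_j X^j$ in the observable $X$, so that sums of moments become expectation values of operators built from $P$.

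First I would fix the index bookkeeping, which is where most of the care is needed. For part (a) set $k=\lfloor n/2\rfloor$; the largest moment appearing in $H_k$ is $m_{2k}$, and since $2\lfloor n/2\rfloor\le n$ every entry of $H_k$ is determined by the given data $\vm^{(n)}$. For part (b) set $k=\lfloor(n-1)/2\rfloor$; the largest moment in $B_k$ is $m_{2k+1}$, and $2\lfloor(n-1)/2\rfloor+1\le n$, so $B_k$ is again fully specified by $\vm^{(n)}$.

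Next comes the key algebraic identity. Assuming $\vm^{(n)}\in\cM_n$, there are $\sigma\ge 0$ and Hermitian $X$ with $m_\ell=\Tr(\sigma X^\ell)$; using $X^\dagger=X$ one has $P^\dagger P=\sum_{i,j}\bar v_i v_j X^{i+j}$, hence
\[
  \vv^\dagger H_k\vv=\sum_{i,j}\bar v_i v_j\, m_{i+j}=\Tr\big(\sigma\,P^\dagger P\big).
\]
Since $\sigma\ge 0$ and $P^\dagger P\ge 0$, the trace of their product is nonnegative (for instance $\Tr(\sigma P^\dagger P)=\Tr[(P\sqrt{\sigma})^\dagger(P\sqrt{\sigma})]\ge 0$), which gives $H_k\ge 0$ and proves (a). For part (b) the extra hypothesis $X\ge 0$ enters through the parallel identity $\vv^\dagger B_k\vv=\Tr(\sigma\,P^\dagger X P)$; here $P^\dagger X P\ge 0$ precisely because $X\ge 0$ (as $w^\dagger P^\dagger X P\,w=(Pw)^\dagger X(Pw)\ge 0$), so the same trace-of-product argument yields $B_k\ge 0$, while part (a) already supplies $H_{\lfloor n/2\rfloor}\ge 0$.

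The reasoning is short, so the only genuine subtlety is that $\sigma$ and $X$ carry no dimension bound. I expect the main (mild) obstacle to be discharging this point cleanly: one must ensure the traces are well defined and that $\Tr(\sigma P^\dagger P)\ge 0$ persists in a possibly infinite-dimensional setting. This can be handled by observing that each computation involves only the finitely many operators $\I,X,\dots,X^{2k}$ evaluated against the fixed state $\sigma$, so it effectively takes place on a finite-dimensional subspace, or alternatively by a trace-class/limiting argument that leaves the nonnegativity intact.
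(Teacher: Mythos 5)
Your proof is correct and is essentially the paper's argument in different clothing: expanding the quadratic form $\vv^\dagger H_k\vv=\Tr(\sigma P^\dagger P)$ and $\vv^\dagger B_k\vv=\Tr(\sigma P^\dagger XP)$ is exactly the statement that $H_k$ and $B_k$ are Gram matrices of the operator families $\{\sigma^{1/2}X^i\}$ and $\{\sigma^{1/2}X^{i+1/2}\}$ under the Hilbert--Schmidt inner product, which is how the paper phrases it. The index bookkeeping and the positivity of the trace of a product of positive semidefinite operators match the paper's reasoning.
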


By applying Lemma~\ref{lem:moment} to the PT-moment problem, we obtain a family 
of criteria for entanglement detection.

\begin{theorem}
  Let $p_k=\Tr[(\rho^{T_A}_{AB})^k]$ for $k=1,2,\dots,n$, then a necessary 
  condition for $\rho_{AB}$ being a separable state is that 
  $B_{\lfloor\frac{n-1}{2}\rfloor}(\vp)\ge 0$.
  \label{thm:pnPPT}
\end{theorem}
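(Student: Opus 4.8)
The plan is to derive Theorem \ref{thm:pnPPT} as a direct corollary of Lemma \ref{lem:moment}(b) together with the reduction already established in the excerpt. Recall that the authors observed that by setting $\sigma=\I$ and $X=\rho_{AB}^{T_A}$, the PT-moments satisfy $p_k=\Tr[(\rho_{AB}^{T_A})^k]=\Tr(\sigma X^k)$, so that $\vp^{(n)}\in\cM_n$ always. The crucial extra input is that when $\rho_{AB}$ is separable it is in particular PPT, meaning $\rho_{AB}^{T_A}\ge 0$; hence the observable $X=\rho_{AB}^{T_A}$ is positive semidefinite, and the stronger membership $\vp^{(n)}\in\cM_n^+$ holds. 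This is exactly the hypothesis needed to invoke part (b) of the lemma.

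Concretely, I would argue as follows. Assume $\rho_{AB}\in\Sep$. Then $\rho_{AB}^{T_A}\ge 0$, so with the identification $\sigma=\I\ge 0$ and $X=\rho_{AB}^{T_A}\ge 0$ we have exhibited a valid state and a positive semidefinite observable reproducing the PT-moments, witnessing $\vp^{(n)}\in\cM_n^+$. Applying Lemma \ref{lem:moment}(b) to the sequence $\vm^{(n)}=\vp^{(n)}$ then immediately yields both $H_{\lfloor n/2\rfloor}(\vp)\ge 0$ and $B_{\lfloor (n-1)/2\rfloor}(\vp)\ge 0$. The statement of the theorem isolates the second of these conditions, $B_{\lfloor (n-1)/2\rfloor}(\vp)\ge 0$, as the necessary criterion for separability, which is what we wanted to prove. (The Hankel condition $H_{\lfloor n/2\rfloor}(\vp)\ge 0$ is automatically satisfied by any state via $\vp^{(n)}\in\cM_n$ and so carries no entanglement information, which is presumably why only the $B$-matrix appears.)

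There is essentially no analytical obstacle at the level of this theorem itself, since all the real work is packaged into Lemma \ref{lem:moment}, whose proof is deferred to Appendix \ref{app:moment}. The only point requiring care is the logical direction: the lemma provides a necessary condition for membership in $\cM_n^+$, and separability implies membership, so the positive-semidefiniteness of the Hankel matrix $B$ is a necessary condition for separability — exactly the contrapositive usable for entanglement detection (a violation $B_{\lfloor (n-1)/2\rfloor}(\vp)\not\ge 0$ certifies entanglement). One should also confirm the conventions match: $p_0=d$ and $p_1=1$ are fixed as stated, so the Hankel entries $[B_k(\vp)]_{ij}=p_{i+j+1}$ are well-defined from the given data $\vp^{(n)}$ up to the largest index $2\lfloor (n-1)/2\rfloor+1\le n$, ensuring the criterion only uses moments of order at most $n$.

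Finally, it is worth verifying the lowest-order case as a sanity check, to confirm consistency with the known $p_3$-PPT criterion quoted in Eq.~\eqref{eq:p3PPT}. For $n=3$ we have $\lfloor (n-1)/2\rfloor=1$, so the criterion reads $B_1(\vp)\ge 0$ with
\begin{equation}
  B_1(\vp)=
  \begin{bmatrix}
    p_1 & p_2\\
    p_2 & p_3\\
  \end{bmatrix}
  =
  \begin{bmatrix}
    1 & p_2\\
    p_2 & p_3\\
  \end{bmatrix}.
\end{equation}
Positive semidefiniteness of this $2\times 2$ matrix is equivalent to $p_3\ge 0$ together with the determinant condition $p_3-p_2^2\ge 0$, i.e.\ $p_3\ge p_2^2$, recovering exactly Eq.~\eqref{eq:p3PPT}. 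This confirms that Theorem \ref{thm:pnPPT} is the promised generalization in which the $p_3$-PPT criterion sits as the lowest-order member of the family.
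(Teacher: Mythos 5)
Your proposal is correct and follows exactly the paper's route: separability implies PPT, so $X=\rho_{AB}^{T_A}\ge 0$ together with $\sigma=\I\ge 0$ places $\vp^{(n)}$ in $\cM_n^+$, and Lemma~\ref{lem:moment}(b) then gives $B_{\lfloor\frac{n-1}{2}\rfloor}(\vp)\ge 0$. Your side remarks (the $H$-condition carrying no entanglement information, and the $n=3$ reduction to $p_3\ge p_2^2$) likewise match the paper's own discussion.
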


Before preceding, we have a few remarks on Lemma~\ref{lem:moment} and 
Theorem~\ref{thm:pnPPT}. First, the conditions are almost sufficient in 
Lemma~\ref{lem:moment}. If we consider the moments $\vm^{(n)}$ in the closure 
of $\cM_n$ or $\cM_n^+$, then the conditions of the positivity of Hankel 
matrices are also sufficient in Lemma~\ref{lem:moment}; see 
Appendix~\ref{app:moment} for details. Because of the finite precision in 
actual experiments, this also means that Theorem~\ref{thm:pnPPT} is the best 
criterion when relaxing the PT-moment problem to the classical moment problems.

Second, although the condition $H_{\lfloor\frac{n}{2}\rfloor}\ge 0$ is also 
necessary for $\rho_{AB}$ being separable, it does not give an entanglement 
criterion as this condition is satisfied by any (separable or entangled) state 
according to Lemma~\ref{lem:moment}(a).

Third, by noting that $p_1=1$, the lowest-order criterion from 
Theorem~\ref{thm:pnPPT}, $B_1\ge 0$, gives that $p_3\ge p_2^2$, which is 
exactly the $p_3$-PPT condition in Eq.~\eqref{eq:p3PPT} from 
Ref.~\cite{Elben.etal2020}.  When $k>1$, $B_k$ gives stronger criteria for 
entanglement detection.  Accordingly, we call the condition
\begin{equation}
  \rho_{AB}\in\Sep~\Rightarrow~
  B_{\lfloor\frac{n-1}{2}\rfloor}(\vp)\ge 0,
  \label{eq:pnPPT}
\end{equation}
$p_n$-PPT criterion for $n=3,5,7,\dots$. The power of the $p_n$-PPT criteria 
will be illustrated with examples after we describe the optimal method for the 
PT-moment problem.

Last, we would like to point out that although higher-order criteria 
$p_n^{n-2}\ge p_{n-1}^{n-1}$ were also proposed in Ref.~\cite{Elben.etal2020}, 
they usually cannot detect more entangled states than the $p_3$-PPT criterion.  
In Appendix~\ref{app:comparision}, we show that these inequalities are strictly 
weaker than the $p_n$-PPT criteria from Theorem~\ref{thm:pnPPT} and explain why 
these inequalities are usually much weaker.

\textit{Optimal solution to the PT-moment problem.---}%
%
Theorem~\ref{thm:pnPPT} already provides a family of strong entanglement 
criteria, but they are not optimal. This is because in 
Eqs.~(\ref{eq:momentSpace},\,\ref{eq:momentSpacePSD}) $\sigma$ can be 
arbitrary, but in the PT-moment problem $\sigma$ is always $\I$. In the 
following, we give an optimal solution to the PT-moment problem.

By writing the spectrum of $\rho_{AB}^{T_A}$ as $(x_1,x_2,\dots,x_d)$, one can 
easily see that the PT-moment problem is equivalent to characterizing the set
\begin{equation}
  \cT_n^+
  ={\Big\{\vp^{(n)}
  \bmid\sum_{i=1}^dx_i^k=p_k,~x_i\ge 0\Big\}}.
  \label{eq:cTp}
\end{equation}
Indeed, for any $\vp^{(n)}\in\cT_n^+$ a compatible separable state can be 
constructed as follows: Relabel $x_i$ for $i=1,2,\dots,d$ as $x_{\alpha\beta}$ 
for $\alpha=1,2,\dots,d_A$ and $\beta=1,2,\dots,d_B$;
then construct a separable state 
$\rho_{AB}=\sum_{\alpha,\beta}x_{\alpha\beta}\ket{\alpha}\bra{\alpha}\otimes 
\ket{\beta}\bra{\beta}$, where $\ket{\alpha}, \ket{\beta}$ are states in the 
computational basis. This state has $p_k=\Tr[(\rho_{AB}^{T_A})^k]$ for 
$k=0,1,\dots,n$. For convenience, we also define the more general set
\begin{equation}
  \cT_n
  ={\Big\{\vp^{(n)}
  \bmid\sum_{i=1}^dx_i^k=p_k,~x_i\in\dR\Big\}}.
  \label{eq:cT}
\end{equation}
Hereafter, the eigenvalues $(x_1,x_2,\dots,x_d)$ are always assumed to be 
sorted in descending order, unless otherwise stated. In 
Eqs.~(\ref{eq:cTp},\,\ref{eq:cT}), the dimension $d=\dim(\cH_A\otimes\cH_B)$ is 
considered as fixed, but actually the optimal entanglement criteria in the 
following, e.g., Eq.~\eqref{eq:p3opt}, do not depend on $d$ anymore.

The key idea of the optimal criteria is to consider the following optimization,
\begin{equation}
  \begin{aligned}
    &\minover[x_i]/\maxover[x_i] && \hat{p}_n:=\sum_{i=1}^d x_i^n\\
    &\subto && \sum_{i=1}^d x_i^k=p_k ~~\text{for}~k=1,2,\dots,n-1,\\
    &       && x_i\ge 0 ~~\text{for}~i=1,2,\dots,d.
  \end{aligned}
  \label{eq:polyopt}
\end{equation}
Note that this may also be viewed as a minimization or maximization of the 
R\'enyi or Tsallis entropy of order $n$ under the constraint that the entropies 
for lower integer orders are fixed. Suppose that the solutions are given by 
$\hat{p}_n^{\min}$ and $\hat{p}_n^{\max}$, respectively, then 
$p_n\in[\hat{p}_n^{\min},\hat{p}_n^{\max}]$ provides a necessary condition for 
$\rho_{AB}$ being separable. If one can further show that all 
$p_n\in[\hat{p}_n^{\min},\hat{p}_n^{\max}]$ are attainable by some 
$(x_1,x_2,\dots,x_d)$ from a separable state, this will imply the sufficiency 
of the condition. As Eq.~\eqref{eq:polyopt} is a polynomial optimization, the 
sum-of-squares hierarchy can, in principle, be used for approximating the 
bounds \cite{Lasserre2001,Parrilo2000}.  Remarkably, an alternative 
sum-of-squares method was used in Ref.~\cite{DelasCuevas.etal2020} for bounding 
the negative eigenvalues from moments. Here, instead of using these 
approximation methods, we propose an exact method for solving 
Eq.~\eqref{eq:polyopt} analytically.

We start from the simplest case $n=3$. As shown in Appendix~\ref{app:optimal}, 
the maximum and minimization are achieved by
\begin{align}
  \vx_3^{\max}&=(x_1,x_2,x_2,\dots,x_2),
  \label{eq:p3max}\\
  \vx_3^{\min}&=(x_1,x_1,\cdots,x_1,x_{\alpha+1},0,0,\dots,0),
  \label{eq:p3min}
\end{align}
respectively, where $x_1$ appears $\alpha=\lfloor 1/p_2\rfloor$ times in 
Eq.~\eqref{eq:p3min}.  Thus, we obtain the following necessary and sufficient 
condition for the PT-moment problem of order three.

\begin{theorem}
  (a) There exists a $d$-dimensional separable state $\rho_{AB}$ satisfying 
  that $p_k=\Tr[(\rho^{T_A}_{AB})^k]$ for $k=1,2,3$, if and only if
  \begin{align}
    \label{eq:p3opt1}
    &p_1=1,\quad\frac{1}{d}\le p_2\le 1,\\
    \label{eq:p3opt2}
    &p_3\le [1-(d-1)y]^3+(d-1)y^3,\\
    &p_3\ge\alpha x^3+(1-\alpha x)^3,
  \end{align}
  where  $\alpha=\lfloor\frac{1}{p_2}\rfloor$, 
  $x=\frac{\alpha+\sqrt{\alpha[p_2(\alpha+1)-1]}}{\alpha(\alpha+1)}$, and 
  $y=\frac{d-1-\sqrt{(d-1)(p_2d-1)}}{d(d-1)}$.\\
  (b) More importantly, suppose that the $p_k$ for $k=1,2,3$ are PT-moments 
  from a quantum state. Then, they are compatible with a separable state if and 
  only if
  \begin{equation}
    p_3\ge \alpha x^3+(1-\alpha x)^3,
    \label{eq:p3opt}
  \end{equation}
  where $\alpha$ and $x$ are as above.
  \label{thm:p3opt}
\end{theorem}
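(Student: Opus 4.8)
The plan is to use the reduction already established around Eq.~\eqref{eq:cTp}: proving part (a) is equivalent to characterizing, for fixed $d$, which triples $(1,p_2,p_3)$ are attained by $p_k=\sum_{i=1}^d x_i^k$ with all $x_i\ge 0$. I would first settle the feasibility of the first two moments. Since $p_1=\Tr\rho_{AB}=1$ is automatic, and for nonnegative $x_i$ summing to one one has $\tfrac1d\le\sum_i x_i^2\le(\sum_i x_i)^2=1$ by Cauchy--Schwarz and by positivity of the cross terms, the condition \eqref{eq:p3opt1} is exactly the feasibility region of $(p_1,p_2)$. For each such $(1,p_2)$ the attainable values of $p_3$ form the closed interval $[\hat p_3^{\min},\hat p_3^{\max}]$ defined by the optimization \eqref{eq:polyopt}.

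Next I would compute the two endpoints from the extremal configurations \eqref{eq:p3max} and \eqref{eq:p3min}, which I take as given. Inserting the maximizer $x_1$ (once) and $y$ ($d-1$ times) into $\sum_i x_i=1$ and $\sum_i x_i^2=p_2$ gives a single quadratic for $y$; the root with $y\le x_1$ is the stated $y=\frac{d-1-\sqrt{(d-1)(p_2 d-1)}}{d(d-1)}$, whence $\hat p_3^{\max}=[1-(d-1)y]^3+(d-1)y^3$, reproducing \eqref{eq:p3opt2}. Likewise, inserting the minimizer $x$ ($\alpha$ times), $1-\alpha x$ (once) and zeros leaves only $\alpha x^2+(1-\alpha x)^2=p_2$, whose relevant root is $x=\frac{\alpha+\sqrt{\alpha[p_2(\alpha+1)-1]}}{\alpha(\alpha+1)}$, giving $\hat p_3^{\min}=\alpha x^3+(1-\alpha x)^3$, which is \eqref{eq:p3opt}. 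I would then show the attainable set of $p_3$ at fixed $(1,p_2)$ is the \emph{full} interval: the feasible set $S=\{x\ge 0:\sum_i x_i=1,\sum_i x_i^2=p_2\}$ is compact and $\sum_i x_i^3$ is continuous, and although $S$ may be disconnected, it is invariant under coordinate permutations while $\sum_i x_i^3$ is permutation-invariant, so its image coincides with that of a single connected component, which by the intermediate value theorem is an interval; concretely one can also transport mass among a few coordinates keeping $\sum_i x_i$ and $\sum_i x_i^2$ fixed while sweeping $\sum_i x_i^3$ monotonically. Together with the explicit separable construction $\rho_{AB}=\sum_{\alpha\beta}x_{\alpha\beta}\ket{\alpha}\bra{\alpha}\otimes\ket{\beta}\bra{\beta}$ for the ``if'' direction, this proves (a).

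Finally, for (b) I would argue that the upper bound \eqref{eq:p3opt2} and the constraint $p_2\ge 1/d$ become vacuous once $d$ is free. If $(1,p_2,p_3)$ are genuine PT-moments, the real spectrum $x_i$ of $\rho_{AB}^{T_A}$ obeys $\sum_i x_i^2=\Tr[(\rho_{AB}^{T_A})^2]=\Tr(\rho_{AB}^2)\le 1$ and $\sum_i x_i^3\le\sum_i\abs{x_i}^3\le(\sum_i x_i^2)^{3/2}=p_2^{3/2}$ by monotonicity of $\ell^p$-norms. On the other hand $\hat p_3^{\max}$ is nondecreasing in $d$ (appending a zero only enlarges the feasible set) and tends to $p_2^{3/2}$ as $d\to\infty$, as one checks from the formula for $y$, while the lower bound \eqref{eq:p3opt} depends only on $\alpha$ and $p_2$. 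Hence for $p_2<1$ one has $p_3<p_2^{3/2}$ strictly, so a sufficiently large $d$ makes \eqref{eq:p3opt1}--\eqref{eq:p3opt2} hold automatically; the degenerate case $p_2=1$ forces $p_3=1$ and is separable directly. This collapses (a) to the single inequality \eqref{eq:p3opt} and proves (b).

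The main obstacle is the input I assumed: rigorously establishing the extremal structures \eqref{eq:p3max}--\eqref{eq:p3min}, i.e., the KKT analysis showing the positive entries solve one quadratic (so at most two distinct positive values, plus zeros), pinning the multiplicity $\alpha=\lfloor 1/p_2\rfloor$ in the minimizer from the ordering constraint, and verifying these critical configurations are global rather than merely local extrema. A secondary subtlety is the clean justification that the entire interval is attained despite the possible disconnectedness of $S$.
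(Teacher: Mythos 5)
Your write-up correctly reduces the problem to characterizing $\cT_3^+$, correctly solves the quadratics for the two endpoints, and your route to part (b) — bounding $p_3\le\|x\|_3^3\le p_2^{3/2}$ by monotonicity of $\ell^p$-norms and noting that $\hat p_3^{\max}(d)$ increases to $p_2^{3/2}$ as $d\to\infty$ — is a legitimate alternative to the paper's argument (which instead drops the constraints $x_i\ge 0$, observes that the unconstrained maximizer is still of the form \eqref{eq:p3max} and is automatically nonnegative, and concludes the upper bound holds for \emph{every} state in the same dimension). However, there is a genuine gap, and you have named it yourself: you take the extremal structures \eqref{eq:p3max} and \eqref{eq:p3min}, the value $\alpha=\lfloor 1/p_2\rfloor$, and the globality of these critical configurations as given. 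That is not a technical loose end — it is the entire mathematical content of the theorem; what remains is arithmetic. The paper closes this gap by differentiating the constraints and the objective, solving the resulting linear system in $\md x_i$ via Cramer's rule and the Vandermonde determinant to obtain $\md\hat p_3=3\prod_{j\ne i}(x_i-x_j)\,\md x_i$ as in Eq.~\eqref{eq:polyoptn3d3dp3}; the alternating sign pattern this imposes on any triple $(x_{i_1},x_{i_2},x_{i_3})$ forces every local maximum to satisfy $x_{i_2}=x_{i_3}$ (hence $\vx_3^{\max}=(x_1,x_2,\dots,x_2)$) and every local minimum to satisfy $x_{i_1}=x_{i_2}$ or $x_{i_3}=0$ (hence $\vx_3^{\min}$ as in \eqref{eq:p3min}); the multiplicity $\alpha$ is then pinned by the majorization chain \eqref{eq:majorization} together with the strict Schur-convexity of $\sum_i x_i^2$, which yields $1/(\alpha+1)<p_2\le 1/\alpha$; and uniqueness of the two resulting configurations upgrades ``local'' to ``global.'' Without some version of this analysis your argument only verifies that the stated formulas are consistent with the claimed extremizers.

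A secondary flaw: your justification that the attainable $p_3$ fill the whole interval is not sound as written. Permutation invariance of the feasible set $S$ and of $\sum_i x_i^3$ only shows that components of $S$ lying in the same permutation orbit have equal images; it does not rule out several orbits of components with different image intervals. Your fallback (``transport mass among a few coordinates keeping $\sum_i x_i$ and $\sum_i x_i^2$ fixed while sweeping $\sum_i x_i^3$ monotonically'') is essentially the paper's differential argument again, i.e., it relies on exactly the machinery you deferred; the paper instead infers connectedness of the feasible region from the uniqueness of the local extrema. So the skeleton matches the paper, but the load-bearing step is missing.
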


Mathematically speaking, Theorem~\ref{thm:p3opt}(a) fully characterizes the set 
$\cT_3^+$, while Theorem~\ref{thm:p3opt}(b) characterizes the difference 
between $\cT_3^+$ and $\cT_3\setminus\cT_3^+$. In other words, 
Eqs.~(\ref{eq:p3opt1},\,\ref{eq:p3opt2}) are satisfied by any (separable or 
entangled) state. In practice, $p_k$ are usually obtained from experiments, 
hence Eq.~\eqref{eq:p3opt} should be used for entanglement detection. Thus, we 
will refer to Eq.~\eqref{eq:p3opt} as the $p_3$-OPPT (optimal PPT) criterion.  
Again, we emphasize that the $p_3$-OPPT criterion is dimension-independent.

According to Eq.~\eqref{eq:polyopt}, this method is not restricted to the case 
$n=3$. For example, when $n=4$, the maximum and minimum are achieved by
\begin{align}
  \vx_4^{\max}=&(x_1,x_2,x_2,\dots,x_2,
  x_{\beta+2},0,0,\dots,0),
  \label{eq:p4max}\\
  \vx_4^{\min}=&(x_1,x_1\cdots,x_1,x_{\gamma+1},
    x_{\gamma+2},x_{\gamma+2},\dots,x_{\gamma+2}),
  \label{eq:p4min}
\end{align}
respectively, where $\beta$ and $\gamma$ are some fixed integers. However, an 
important difference to the case $n=3$ is that although solving the problem 
analytically is still possible, writing down the optimal values is no longer 
straightforward.  This is because the roots of higher-order polynomials are 
much more complicated \cite{Abel1824}. In Appendix~\ref{app:optimal}, we 
describe the general procedure for solving the optimization problems in 
Eq.~\eqref{eq:polyopt}. We also provide the computer code for $n=3,4,5$ 
\cite{SuppArxiv}.

\textit{Examples.---}%
%
Before discussing the examples, we show how to quantify the violation of the 
$p_n$-PPT and $p_n$-OPPT criteria. Analogous to the PPT criterion, we use the 
negativity \cite{Zyczkowski.etal1998,Vidal.Werner2002} to quantify the 
violation of $p_n$-PPT criteria. For $n=3,5,7,\dots$, we define
\begin{equation}
  \mathcal{N}_n(\rho_{AB})=\frac{1}{2}\big\lVert 
  B_{\lfloor\frac{n-1}{2}\rfloor}(\vp)\big\rVert
  -\frac{1}{2}\Tr\big[B_{\lfloor\frac{n-1}{2}\rfloor}(\vp)\big],
  \label{eq:pnViolation}
\end{equation}
i.e., the absolute sum of the negative eigenvalues of 
$B_{\lfloor\frac{n-1}{2}\rfloor}$, where $\norm{\cdot}$ denotes the trace norm.  
For the $p_n$-OPPT, we quantify the violation via
\begin{equation}
  \mathcal{O}_n(\rho_{AB})=\max\left\{p_n^{\min}-p_n,~p_n-p_n^{\max},~0\right\},
  \label{eq:optViolation}
\end{equation}
for $n=3,4,5,\dots$. Remarkably, although both the $p_n$-PPT and $p_n$-OPPT 
criteria can be viewed as hierarchical entanglement criteria based on 
PT-moments, there are two important distinctions. First, the $p_n$-PPT criteria 
only work when $n$ is odd, while the $p_n$-OPPT criteria work whenever $n\ge 
3$. Second, $\mathcal{N}_n(\rho_{AB})$ in Eq.~\eqref{eq:pnViolation} is 
well-defined for any $\rho_{AB}$, while $\mathcal{O}_n(\rho_{AB})$ only exists 
when $\mathcal{O}_{n-1}(\rho_{AB})=0$, i.e., the optimization problems in 
Eq.~\eqref{eq:polyopt} are feasible.

\begin{table}
  \centering
  \begin{tabular}{|c||c|c|c|c|c|c|c|}
    \hline
    $D$ & NPT &NPT$3$ & ONPT$3$ & ONPT$4$ & NPT$5$ & ONPT$5$\\
    \hline
    2 &75.68\% & 25.53\% & 39.97\% & 75.68\% & 64.78\% & 75.68\%\\
    3 &99.99\% & 25.32\% & 39.46\% & 91.63\% & 97.51\% & 98.97\%\\
    4 &100\% & 23.29\% & 33.69\% & 98.68\% & 100.00\% & 100.00\%\\
    5 &100\% & 21.80\% & 34.54\% & 99.95\% & 100\% & 100\%\\
    6 &100\% & 20.93\% & 31.20\% & 100.00\% & 100\% & 100\%\\
    \hline
  \end{tabular}
  \caption{%
    Fraction of (small) $D\times D$ states in the Hilbert-Schmidt distribution 
    (1,000,000 samples) that can be detected with various criteria. Here, NPT 
    denotes the states violating the PPT criterion, NPT$n$ (NPT$3$, NPT$5$) 
    denotes the states violating the $p_n$-PPT criterion in 
    Eq.~\eqref{eq:pnPPT}, and ONPT$n$ (ONPT$3$, ONPT$4$, ONPT$5$) denotes the 
    states violating the $p_n$-OPPT criterion.
  }
  \label{tab:HS}
\end{table}
%
To show the power of our criteria, we first investigate the entanglement of 
randomly generated states. Here, we sample the random $D\times D$ states 
($\dim(\cH_A)=\dim(\cH_B)=D$) with the Hilbert-Schmidt distribution 
\cite{Zyczkowski.etal2011}. In Table~\ref{tab:HS}, we show the results when $D$ 
is small ($D=2,3,4,5,6$); additional results when $D$ is large 
($D=10,20,30,40$) are shown in Appendix~\ref{app:numerical}.

From the sampling, one can see a few remarkable advantages of our criteria.  
First, most of the entangled states can already be detected by the $p_5$-PPT or 
the $p_4$-OPPT criterion. Second, although the $p_3$-PPT and $p_3$-OPPT 
criteria are both based on the PT-moments $p_2$ and $p_3$, the optimal 
criterion $p_3$-OPPT is significantly stronger than the $p_3$-PPT criterion in 
Ref.~\cite{Elben.etal2020}. Furthermore, the optimal criterion not only detects 
more entangled states, but also the violation is more significant as shown in 
Appendix~\ref{app:numerical}. Third, compared with the usual entanglement 
witness method, our criteria have the advantage that neither common reference 
frames nor prior information is needed for the entanglement detection 
\cite{Ketterer.etal2019,Elben.etal2020}.  Also, compared with the widely-used 
fidelity-based entanglement witness, many more entangled states can be detected 
by comparing Table~\ref{tab:HS} with the results in 
Refs.~\cite{Weilenmann.etal2020,Guehne.etal2021}.

\begin{figure}
  \centering
  \includegraphics[width=.45\textwidth]{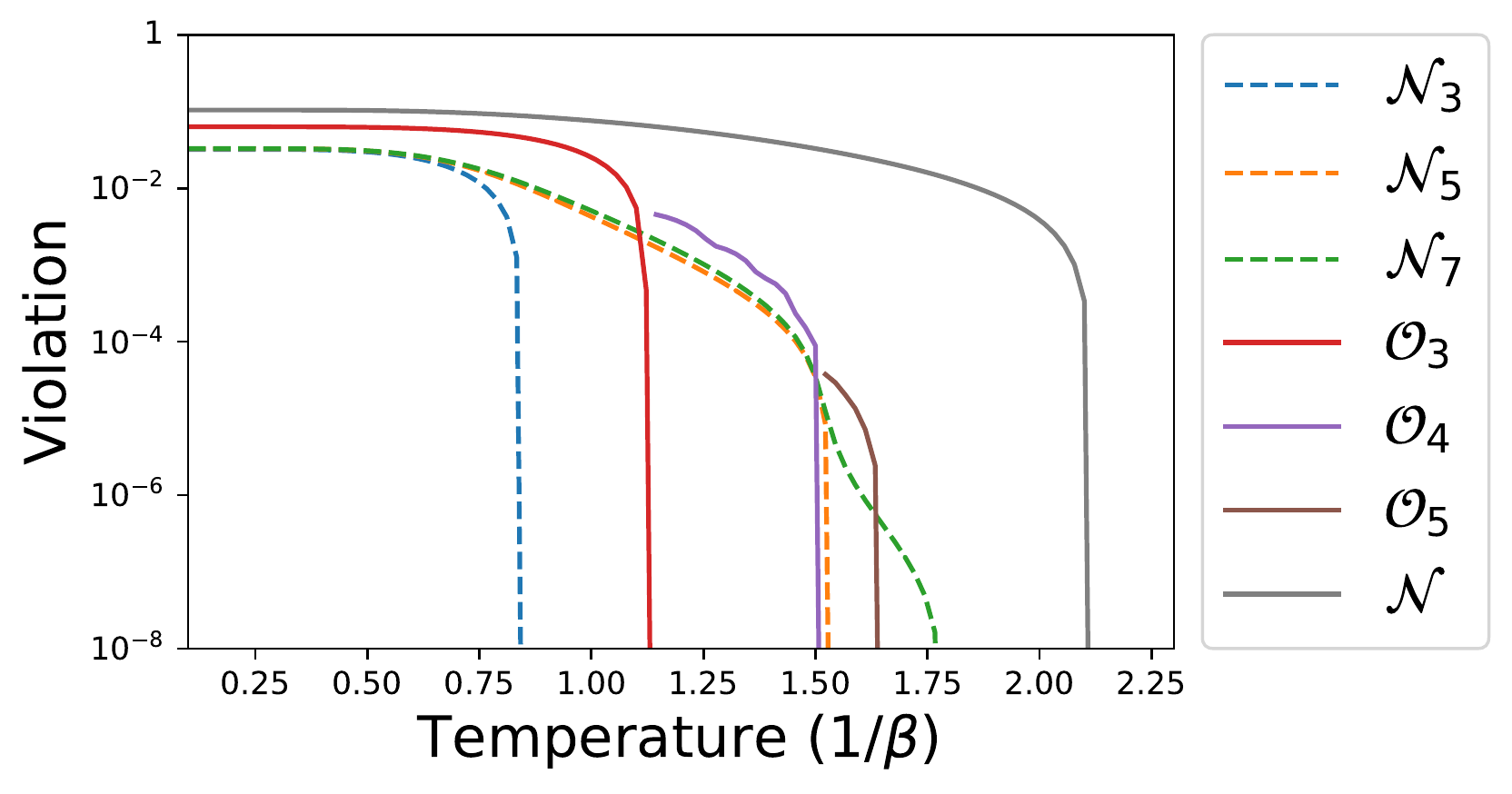}
  \caption{%
    The strength of different PT-moment-based entanglement criteria.  Here, we 
    choose the parameters $J=1$ and $g=2.5$ for a $10$-qubit system.  The 
    entanglement for the bipartition $(1,2,\dots,5|6,7,\dots,10)$ is 
    considered. The violations $\cN_n$ and $\cO_n$ are defined in 
    Eqs.~(\ref{eq:pnViolation},\,\ref{eq:optViolation}), and $\cN$ is the 
    negativity of entanglement \cite{Vidal.Werner2002}.
  }
  \label{fig:TansIsing}
\end{figure}

For the second example, we consider the one-dimensional quantum Ising model in 
a transverse magnetic field,
\begin{equation}
  H=-J\Big(\sum_{i=1}^N\sigma_i^z\sigma_{i+1}^z
  +g\sum_{i=1}^N\sigma_i^x\Big),
  \label{eq:TransIsing}
\end{equation}
with the periodic boundary condition ($\sigma^z_{N+1}=\sigma_1^z$), where $J$ 
corresponds to the coupling strength and $g$ is the relative strength of the 
external magnetic field. We study the entanglement of the thermal equilibrium 
(Gibbs) state $\rho(\beta)=\me^{-\beta H}/Z(\beta)$,
where $Z(\beta)=\Tr[\me^{-\beta H}]$ is the partition function and
$\beta$ is the inverse temperature. The strength of different PT-moment-based 
entanglement criteria for this model is illustrated in 
Fig.~\ref{fig:TansIsing}.

At last, we would like to note that the example in Fig.~\ref{fig:TansIsing} 
also illustrates an important challenge for testing the PT-moment-based 
criteria. That is, the violations can become very small for higher-order 
criteria. Indeed, this is not specific to the PT-moments but also the other 
moment-based methods.  The fundamental reason is that the \mbox{(PT-)moments} 
decrease exponentially as $n$ goes large.  This can be easily seen from the 
relation that $\abs{\Tr(X^n)}\le[\Tr(X^2)]^{\frac{n}{2}}$
for any Hermitian operator $X$ and $n\ge 2$
\footnote{This follows directly from the Schur convexity 
\cite{Bhatia1997} of $f(\vy)=\sum_iy_i^{n/2}$ for $y_i\in[0,+\infty)$.}.
In the PT-moment problem, $\Tr[(\rho_{AB}^{T_A})^2]=\Tr[\rho_{AB}^2]$, which is 
the purity of the state. Hence, the violations in Fig.~\ref{fig:TansIsing} 
become small (compared to the PPT criterion) if the temperature increases.  
Still, it should be remembered that a small violation is, in general, not 
connected to a statistically insignificant violation 
\cite{vanDam.etal2005,Acin.etal2005,Jungnitsch.etal2010};
see Appendix~\ref{app:statistics} for more discussions.
This difference also means that the numerical values of the violations in 
Eqs.~(\ref{eq:pnViolation},\,\ref{eq:optViolation}) should not be directly 
compared with each other.

\textit{Conclusion.---}%
%
We have developed two systematic methods for detecting entanglement from 
PT-moments. The first method is based on the classical moment problems, whose 
lowest order gives the $p_3$-PPT criterion in Ref.~\cite{Elben.etal2020} and 
higher orders provide strictly stronger criteria. The second method is the 
optimal method, which gives necessary and sufficient conditions for 
entanglement detection based on PT-moments. We demonstrated that our criteria 
are significantly better than existing criteria for physically relevant states.

For the future research, there are several possible directions.  First, one may 
extend the presented theory by taking, instead of the transposition, other 
positive but not completely positive maps. This may allow one to characterize 
entanglement in quantum states that escape the detection by the PPT criterion.  
Second, for the analysis of current experiments, it would be highly  desirable 
to extend the presented theory to the characterization of multiparticle 
entanglement. Indeed, potential generalizations of the PPT criterion for the 
multiparticle case exist \cite{Jungnitsch.etal2011}, but how to evaluate this 
using randomized measurements remains an open question.

  We would like to thank Andreas Ketterer, H. Chau Nguyen, Timo Simnacher, and  
  Nikolai Wyderka for discussions. This work was supported by the Deutsche 
  Forschungsgemeinschaft (DFG, German Research Foundation, project numbers 
  447948357 and 440958198), the Sino-German Center for Research Promotion, and 
  the ERC (Consolidator Grant 683107/TempoQ).

  \textit{Note added.}---While finishing this manuscript, we became aware of 
  a related work by A. Neven \textit{et al.} \cite{Neven.etal2021}.

\onecolumngrid
\vspace{2em}
\appendix
\newtheorem{manualtheoreminner}{Theorem}
\newenvironment{manualtheorem}[1]{%
  \renewcommand\themanualtheoreminner{#1}%
\manualtheoreminner}{\endmanualtheoreminner}
\newtheorem{manuallemmainner}{Lemma}
\newenvironment{manuallemma}[1]{%
  \renewcommand\themanuallemmainner{#1}%
\manuallemmainner}{\endmanuallemmainner}

\section{The moment problems}\label{app:moment}

In this appendix, we show that the Hankel matrices give almost necessary and 
sufficient conditions for the moment problems. These results follow from 
well-known results in the classical moment problems, which are expressed in the 
language of measure theory; see, for example, 
Refs.~\cite[Chapter~3]{Lasserre2010} and \cite[Chapter~9]{Schmuedgen2017}.  
Here, we give an elementary proof from the point of view of quantum theory. In 
addition, for the sufficiency part (Lemma~\ref{lem:Hankel} in the following), 
we consider the closures $\Cl(\cM_n)$ and $\Cl(\cM_n^+)$ in order to avoid the 
complicated rank and range conditions. We start from the proof of 
Lemma~\ref{lem:moment}.

\begin{manuallemma}{\ref{lem:moment}}
  (a) A necessary condition for $\vm^{(n)}=(m_0,m_1,\dots,m_n)\in\cM_n$ is that 
  $H_{\lfloor\frac{n}{2}\rfloor}\ge 0$.\\
  (b) A necessary condition for $\vm^{(n)}=(m_0,m_1,\dots,m_n)\in\cM_n^+$ is 
  that $H_{\lfloor\frac{n}{2}\rfloor}\ge 0$ and 
  $B_{\lfloor\frac{n-1}{2}\rfloor}\ge 0$.\\
\end{manuallemma}

\begin{proof}
  We take advantage of the Hilbert-Schmidt inner product in the operator space
  \begin{equation}
    \langle X, Y\rangle := \Tr(X^\dagger Y).
    \label{eq:HilbertSchmidt}
  \end{equation}
  Now, consider the sequence of operators 
  $\vv=(\rho^{\frac{1}{2}},\rho^{\frac{1}{2}}X,\dots,
  \rho^{\frac{1}{2}}X^{\lfloor\frac{n}{2}\rfloor})$, and similarly the  
  sequence of operators
  $\vu=(\rho^{\frac{1}{2}}X^{\frac{1}{2}},\rho^{\frac{1}{2}}X^{\frac{3}{2}},
  \dots,\rho^{\frac{1}{2}}X^{\lfloor\frac{n-1}{2}\rfloor+\frac{1}{2}})$ when 
  $X\ge 0$. Then, the Gram matrices for $\vv$ and $\vu$ are given by
  \begin{align}
    \langle v_i, v_j\rangle
    &=\Tr(X^i\rho^{\frac{1}{2}}\rho^{\frac{1}{2}}X^j)
    =\Tr(\rho X^{i+j})
    =m_{i+j},
    \label{eq:Gram}\\
    \langle u_i, u_j\rangle
    &=\Tr(X^{i+\frac{1}{2}}\rho^{\frac{1}{2}}
    \rho^{\frac{1}{2}}X^{j+\frac{1}{2}})
    =\Tr(\rho X^{i+j+1})
    =m_{i+j+1},
    \label{eq:GramPSD}
  \end{align}
  which are just the Hankel matrices $H_{\lfloor\frac{n}{2}\rfloor}$ and 
  $B_{\lfloor\frac{n-1}{2}\rfloor}$. As Gram matrices are always positive 
  semidefinite \cite{Horn.Johnson2012}, we get the results:
  (a) $H_{\lfloor\frac{n}{2}\rfloor}\ge 0$ when $\rho\ge 0$;
  (b) $H_{\lfloor\frac{n}{2}\rfloor}\ge 0$ and 
  $B_{\lfloor\frac{n-1}{2}\rfloor}\ge 0$ when $\rho\ge 0$ and $X\ge 0$.
\end{proof}

\begin{lemma}
  (a) A necessary and sufficient condition for 
  $\vm^{(n)}=(m_0,m_1,\dots,m_n)\in\Cl(\cM_n)$ is that 
  $H_{\lfloor\frac{n}{2}\rfloor}\ge 0$.\\
  (b) A necessary and sufficient condition for 
  $\vm^{(n)}=(m_0,m_1,\dots,m_n)\in\Cl(\cM_n^+)$ is that 
  $H_{\lfloor\frac{n}{2}\rfloor}\ge 0$
  and $B_{\lfloor\frac{n-1}{2}\rfloor}\ge 0$.\\
  \label{lem:Hankel}
\end{lemma}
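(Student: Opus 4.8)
I would split the proof into necessity and sufficiency. \emph{Necessity} is immediate from Lemma~\ref{lem:moment}: every element of $\cM_n$ satisfies $H_{\lfloor n/2\rfloor}\ge 0$, and every element of $\cM_n^+$ satisfies in addition $B_{\lfloor(n-1)/2\rfloor}\ge 0$. Since $\vm\mapsto H_k(\vm)$ and $\vm\mapsto B_k(\vm)$ are linear and the cone of positive semidefinite matrices is closed, these inequalities are preserved under taking limits and therefore hold on $\Cl(\cM_n)$ and $\Cl(\cM_n^+)$.

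The content is \emph{sufficiency}, which I would prove by an operator (Gauss-quadrature) construction, first reducing to a strictly positive definite Hankel matrix. Write $\ell=\lfloor n/2\rfloor$. Fixing a reference sequence $\vm^\ast$ of moments of a measure on $\dR$ all of whose Hankel matrices are strictly positive (for part (b), a measure supported on $[0,\infty)$, so that the shifted Hankel matrices are strictly positive too), the convex combination $(1-\epsilon)\vm+\epsilon\vm^\ast$ converges to $\vm$ and, by linearity of the Hankel map together with ``PSD $+$ PD $=$ PD'', has $H_\ell>0$ (and $B_{\lfloor(n-1)/2\rfloor}>0$). Hence it suffices to realise the strictly definite case by an honest pair $\sigma\ge 0$, $X^\dagger=X$ (resp.\ $X\ge 0$); membership in the closure then follows.

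For the strict case I would read $H_\ell$ as an inner product. Identify the polynomials of degree $\le\ell$ with $\dC^{\ell+1}$ via the monomials $1,x,\dots,x^\ell$ and set $\langle x^i,x^j\rangle:=m_{i+j}=[H_\ell]_{ij}$, a genuine inner product because $H_\ell>0$. On this space define the compressed multiplication operator $\tilde M$ by $\langle x^i,\tilde M x^j\rangle:=m_{i+j+1}$; this is a well-defined self-adjoint operator once one fixes the single moment $m_{2\ell+1}$, which is part of the data when $n$ is odd and may be chosen freely when $n$ is even. Taking $\sigma=\ket{1}\bra{1}$ and $X=\tilde M$ (equivalently, diagonalising $\tilde M=\sum_i x_i\ket{e_i}\bra{e_i}$ and setting $\sigma=\diag(w_i)$, $X=\diag(x_i)$ with weights $w_i=\abs{\braket{1}{e_i}}^2\ge 0$) gives $\Tr(\sigma X^k)=\langle 1,\tilde M^k 1\rangle$. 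Since $\tilde M^j 1=x^j$ for $j\le\ell$, all moments up to order $2\ell$ are reproduced irrespective of the free parameter; and when $n=2\ell+1$ the top moment is matched as well, because in $\langle 1,\tilde M^{2\ell+1}1\rangle=\langle x^\ell,\tilde M x^\ell\rangle$ the defining relation of $\tilde M$ gives exactly $m_{2\ell+1}$. This proves part (a).

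For part (b) I additionally need $X=\tilde M\ge 0$. In the monomial basis the quadratic form of $\tilde M$ is exactly the shifted Hankel matrix $[m_{i+j+1}]_{i,j=0}^{\ell}$, so $\tilde M\ge 0$ is equivalent to positivity of this matrix. For odd $n=2\ell+1$ this is precisely $B_{\lfloor(n-1)/2\rfloor}=B_\ell$, and the (perturbed, strict) hypothesis $B_\ell>0$ delivers $X\ge 0$ at once. For even $n=2\ell$ the hypothesis only controls the leading $\ell\times\ell$ block $B_{\ell-1}=B_{\lfloor(n-1)/2\rfloor}$, and the border of the full $(\ell+1)\times(\ell+1)$ shifted Hankel matrix is fixed except for the free corner $m_{2\ell+1}$; since $B_{\ell-1}>0$ is invertible its border vector lies in the range, and choosing $m_{2\ell+1}$ large makes the full matrix positive definite by a Schur-complement estimate, so again $X\ge 0$. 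I expect the main obstacle to be exactly this even-order Stieltjes step and the attendant parity bookkeeping: one must verify that the single undetermined moment $m_{2\ell+1}$ can always be set to enforce $X\ge 0$ without disturbing the already matched lower moments, which is also what explains why the sharp criterion involves $B_{\lfloor(n-1)/2\rfloor}$ rather than a full $(\ell+1)\times(\ell+1)$ matrix; passing to the closure is what removes the range condition that can fail only in the singular case.
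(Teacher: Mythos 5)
Your proposal is correct and follows essentially the same route as the paper: necessity from Lemma~\ref{lem:moment} together with closedness of the positive semidefinite cone, and sufficiency by perturbing to the strictly definite case with a reference moment sequence and then realizing the (perturbed) moments with a pure state and a self-adjoint operator on an $(\ell+1)$-dimensional space. The only difference is one of packaging: the paper obtains $X$ as a shift operator on the Gram vectors after constructing a flat extension $m_{2\ell+2}$ via a Schur complement, whereas you define the same operator directly as the compressed multiplication operator through its quadratic form $[m_{i+j+1}]_{i,j=0}^{\ell}$, with your even-$n$ Schur-complement choice of the free moment $m_{2\ell+1}$ playing the role of the paper's flat extension in case (b).
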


\begin{proof}
  From Lemma~\ref{lem:moment}, we get that the positive semidefinite property 
  holds when $\vm^{(n)}\in\cM_n$ or $\vm^{(n)}\in\cM_n^+$. Then the necessity 
  parts of both cases~(a) and (b) follow from the fact that the set of positive 
  semidefinite matrices is closed. Thus, we only need to prove the sufficiency 
  parts.

  To prove case~(a), we first show that when $H_{\lfloor\frac{n}{2}\rfloor}>0$, 
  i.e., $H_{\lfloor\frac{n}{2}\rfloor}$ is strictly positive definite, there 
  exists $\ket{\varphi}$ and $X$ of dimension $\lfloor\frac{n}{2}\rfloor+1$ 
  such that
  \begin{equation}
    m_k=\Tr(\ket{\varphi}\bra{\varphi}X^k)=\bra{\varphi}X^k\ket{\varphi}
    \label{eq:puremoments}
  \end{equation}
  for $k=0,1,2,\dots,n$. For simplicity, we use $\ell$ to denote 
  $\lfloor\frac{n}{2}\rfloor$. The basic idea for proving 
  Eq.~\eqref{eq:puremoments} is to construct a so-called flat extension 
  $\vm^{(2\ell+2)}$ of $\vm^{(n)}$, i.e., find $m_{2\ell+2}$ (choose an 
  arbitrary $m_{2\ell+1}$ when $n$ is even) such that
  \begin{equation}
    H_{\ell+1}\ge 0,\quad
    \rank(H_{\ell+1})=\rank(H_\ell).
    \label{eq:flatExt}
  \end{equation}
  Note that $H_\ell>0$ implies that $\rank(H_\ell)=\ell+1$. From the following 
  decomposition
  \begin{equation}
    H_{\ell+1}
    =
    \begin{bmatrix}
      H_\ell & \vect{\mu}_\ell\\
      \vect{\mu}_\ell^T & m_{2\ell+2}
    \end{bmatrix}
    =
    \begin{bmatrix}
      \I & 0\\
      \vect{\mu}_\ell^T H_\ell^{-1} & 1
    \end{bmatrix}
    \begin{bmatrix}
      H_\ell & 0\\
      0 & m_{2\ell+2}-\vect{\mu}_\ell^T H_\ell^{-1}\vect{\mu}_\ell
    \end{bmatrix}
    \begin{bmatrix}
      \I & H_\ell^{-1}\vect{\mu}_\ell\\
      0 & 1
    \end{bmatrix},
    \label{eq:SchurComplement}
  \end{equation}
  where $\vect{\mu}_\ell=[m_{\ell+1},m_{\ell+2},\dots,m_{2\ell+1}]^T$, one can 
  easily see that Eq.~\eqref{eq:flatExt} is satisfied if $m_{2\ell+2}$ is 
  chosen such that the Schur complement $m_{2\ell+2}-\vect{\mu}_\ell^T 
  H_\ell^{-1}\vect{\mu}_\ell$ equals to zero.

  From Eq.~\eqref{eq:flatExt} and $\rank(H_\ell)=\ell+1$, we can construct 
  $\ket{\varphi_i}\in\dR^{\ell+1}$ for $i=0,1,\dots,\ell+1$, such that
  \begin{equation}
    \braket{\varphi_i}{\varphi_j}
    =[H_{\ell+1}]_{ij}
    =m_{i+j},
    \label{eq:defphi}
  \end{equation}
  where $\ket{\varphi_i}$ may not be normalized. Then, the assumption that 
  $H_{\ell}$ is of full rank implies that $\{\ket{\varphi_i}\mid 
  i=0,1,\dots,\ell\}$ is a basis for $\dR^{\ell+1}$, and hence there exists 
  a unique matrix $X\in\dR^{(\ell+1)\times(\ell+1)}$ such that
  \begin{equation}
    X\ket{\varphi_i}=\ket{\varphi_{i+1}},
    \quad\text{for}~i=0,1,\dots,\ell.
    \label{eq:defX}
  \end{equation}
  From Eqs.~(\ref{eq:defphi},\,\ref{eq:defX}), we get that
  \begin{equation}
    \bra{\varphi_i}X\ket{\varphi_j}
    =\bra{\varphi_j}X\ket{\varphi_i}
    =m_{i+j+1}\in\dR
    \quad\text{for}~i,j=0,1,\dots,\ell,
  \end{equation}
  which implies that the real matrix $X$ is symmetric. By letting 
  $\ket{\varphi}=\ket{\varphi_0}$, we get that
  \begin{equation}
    X^i\ket{\varphi}=\ket{\varphi_i}\quad\text{for }
    i=0,1,\dots,\ell+1.
    \label{eq:phiX}
  \end{equation}
  Thus, Eq.~\eqref{eq:puremoments} follows directly from 
  Eqs.~(\ref{eq:defphi},\,\ref{eq:phiX}).

  For the general case that $H_{\lfloor\frac{n}{2}\rfloor}(\vm)\ge 0$, we can 
  construct a sequence of moments $\vm_s^{(n)}$ for $s\in\dN$ such that
  \begin{equation}
    H_{\lfloor\frac{n}{2}\rfloor}(\vm_s)>0,\quad
    \lim_{s\to+\infty}\vm_s^{(n)}=\vm^{(n)}.
    \label{eq:momentsLimit}
  \end{equation}
  For example, we can take
  \begin{equation}
    \vm_s^{(n)}=\left(1-\frac{1}{s+1}\right)\vm^{(n)}+\frac{1}{s+1}\vm_0^{(n)},
  \end{equation}
  where $\vm_0^{(n)}$ is an arbitrary sequence of moments such that 
  $H_{\lfloor\frac{n}{2}\rfloor}(\vm_0)>0$, because
  \begin{equation}
    H_{\lfloor\frac{n}{2}\rfloor}(\vm_s)
    =\left(1-\frac{1}{s+1}\right)H_{\lfloor\frac{n}{2}\rfloor}(\vm)
    +\frac{1}{s+1}H_{\lfloor\frac{n}{2}\rfloor}(\vm_0)>0
  \end{equation}
  for any $s\in\dN$. Then, the necessity follows directly from 
  Eqs.~(\ref{eq:puremoments},\,\ref{eq:momentsLimit}).

  The proof for case~(b) is similar. Again, by assuming that 
  $H_{\lfloor\frac{n}{2}\rfloor}>0$ and $B_{\lfloor\frac{n-1}{2}\rfloor}>0$, 
  a similar argument as in Eq.~\eqref{eq:SchurComplement} implies that we can 
  construct a flat extension $\vm^{(2\lfloor\frac{n}{2}\rfloor+2)}$ of 
  $\vm^{(n)}$ such that
  \begin{equation}
    H_{\lfloor\frac{n}{2}\rfloor+1}\ge 0,\quad
    B_{\lfloor\frac{n}{2}\rfloor}\ge 0,\quad
    \rank\left(H_{\lfloor\frac{n}{2}\rfloor+1}\right)
    =\rank\left(H_{\lfloor\frac{n}{2}\rfloor}\right).
    \label{eq:flatExtS}
  \end{equation}
  Still, we can construct $\ket{\varphi}$ and $X$ as in 
  Eqs.~(\ref{eq:defphi},\,\ref{eq:defX}), then we only need to show that $X\ge 
  0$. This follows from that $\{\ket{\varphi_i}\}_{i=0}^\ell$ is a basis, and
  \begin{equation}
    \bra{\varphi_i}X\ket{\varphi_j}=m_{i+j+1}=\left[B_\ell\right]_{ij},\quad
    B_\ell=B_{\lfloor\frac{n}{2}\rfloor}\ge 0.
    \label{eq:Xpos}
  \end{equation}
  The general case that $H_{\lfloor\frac{n}{2}\rfloor}\ge 0$ and 
  $B_{\lfloor\frac{n-1}{2}\rfloor}\ge0$ can be proved similarly as in 
  Eq.~\eqref{eq:momentsLimit}.
\end{proof}

One may wonder whether Lemma~\ref{lem:Hankel} still holds without taking the 
closure, i.e., whether the sets $\cM_n$ and $\cM_n^+$ are closed. A well-known 
counterexample \cite{Lasserre2010,Schmuedgen2017} for $\cM_n$ is 
$\vm^{(4)}=(1,1,1,1,2)$ with
\begin{equation}
  H_2=
  \begin{bmatrix}
    m_0 & m_1 & m_2\\
    m_1 & m_2 & m_3\\
    m_2 & m_3 & m_4\\
  \end{bmatrix}
  =
  \begin{bmatrix}
    1 & 1 & 1\\
    1 & 1 & 1\\
    1 & 1 & 2\\
  \end{bmatrix}
  \ge 0.
\end{equation}
Suppose that there exists (finite- or infinite-dimensional) $\rho$ and $X$ such 
that $\vm^{(4)}=[\Tr(\rho X^k)]_{k=0}^4=(1,1,1,1,2)$. Then, we have that
\begin{equation}
  \Tr[\rho(X-\I)^2]=\Tr[\rho^{\frac{1}{2}}(X-\I)(X-\I)\rho^{\frac{1}{2}}]=0,
\end{equation}
which implies that $\rho^{\frac{1}{2}}(X-\I)=(X-\I)\rho^{\frac{1}{2}}=0$.  
Thus, $\rho X=\rho$ and hence $\rho X^4=\rho$, which is in contradiction to the 
fact that $\Tr(\rho X^4)=m_4=2\Tr(\rho)=2m_0$. However, $\vm^{(4)}=(1,1,1,1,2)$ 
can be approximated ($\varepsilon\to 0^+$) by
\begin{equation}
  \rho=
  \begin{bmatrix}
    1-\varepsilon & 0\\
    0 & \varepsilon
  \end{bmatrix},
  \quad
  X=
  \begin{bmatrix}
    1 & 0\\
    0 & \varepsilon^{-\frac{1}{4}}
  \end{bmatrix},
\end{equation}
or equivalently, with the pure state 
$\ket{\varphi}=\sqrt{1-\varepsilon}\ket{0}+\sqrt{\varepsilon}\ket{1}$.

For $\cM_n^+$, we can also construct a distinct counterexample 
$\vm^{(4)}=(1,1,2,4,9)$, which satisfies that
\begin{equation}
  H_2=
  \begin{bmatrix}
    m_0 & m_1 & m_2\\
    m_1 & m_2 & m_3\\
    m_2 & m_3 & m_4\\
  \end{bmatrix}
  =
  \begin{bmatrix}
    1 & 1 & 2\\
    1 & 2 & 4\\
    2 & 4 & 9\\
  \end{bmatrix}
  >0,\quad
  B_1=
  \begin{bmatrix}
    m_1 & m_2\\
    m_2 & m_3\\
  \end{bmatrix}
  =
  \begin{bmatrix}
    1 & 2\\
    2 & 4\\
  \end{bmatrix}
  \ge 0.
\end{equation}
It is possible to find $\rho$ and $X$ such that $\Tr(\rho X^i)=m_i$ according 
to Eq.~\eqref{eq:puremoments}. However, it is impossible to make $X\ge 0$, 
because
\begin{equation}
  \Tr[\rho X(X-2\I)^2]=\Tr[\rho^{\frac{1}{2}}(X^{\frac{3}{2}}-2X^{\frac{1}{2}})
  (X^{\frac{3}{2}}-2X^{\frac{1}{2}})\rho^{\frac{1}{2}}]=0,
\end{equation}
which implies that $\rho X^2=2\rho X$ and further $\rho X^4=8\rho X$. This is 
in contradiction to the fact that $\Tr(\rho X^4)=9\Tr(\rho X)$.  However, 
$\vm^{(4)}=(1,1,2,4,9)$ can be approximated ($\varepsilon\to 0^+$) by
\begin{equation}
  \rho=
  \begin{bmatrix}
    \frac{1}{2} & 0 & 0\\
    0 & \frac{1}{2}-\varepsilon & 0\\
    0 & 0 & \varepsilon\\
  \end{bmatrix},
  \quad
  X=
  \begin{bmatrix}
    2 & 0 & 0\\
    0 & 0 & 0\\
    0 & 0 & \varepsilon^{-\frac{1}{4}}\\
  \end{bmatrix},
\end{equation}
or equivalently, with the pure state 
$\ket{\varphi}=\sqrt{1/2}\ket{0}+\sqrt{1/2-\varepsilon}\ket{1}
+\sqrt{\varepsilon}\ket{2}$.

\section{Comparison with the higher-order criteria in 
Ref.~\cite{Elben.etal2020}}\label{app:comparision}

In this appendix, we show that the criteria based on the Hankel matrices are 
strictly stronger than the criteria
\begin{equation}
  p_{n}^{n-2}\ge p_{n-1}^{n-1},
  \quad\text{for}~n=3,4,\dots
  \label{eq:ElbenHigh}
\end{equation}
which were first proposed in Ref.~\cite{Elben.etal2020}. This fact can be 
proved by induction from
\begin{equation}
  \begin{bmatrix}
    p_{n-2} & p_{n-1}\\
    p_{n-1} & p_n
  \end{bmatrix}
  \ge 0,
  \quad\text{for}~n=3,4,\dots
  \label{eq:subHankel}
\end{equation}
which in turn follow from the positive semidefinite property of the Hankel 
matrices $H_k$ and $B_k$. Equation~\eqref{eq:subHankel} implies that
\begin{equation}
  p_n\ge 0,\quad p_np_{n-2}\ge p_{n-1}^2.
  \label{eq:subHigh}
\end{equation}
Note that Eq.~\eqref{eq:subHigh} gives the criterion in 
Eq.~\eqref{eq:ElbenHigh} for $n=3$, i.e., the $p_3$-PPT criterion. Now, assume 
that Eq.~\eqref{eq:ElbenHigh} of order $n-1$ is true, i.e., $p_{n-1}^{n-3}\ge 
p_{n-2}^{n-2}$, then Eq.~\eqref{eq:subHigh} implies that
\begin{equation}
  p_n^{n-2}p_{n-1}^{n-3}\ge p_n^{n-2}p_{n-2}^{n-2}\ge p_{n-1}^{2n-4},
\end{equation}
from which Eq.~\eqref{eq:ElbenHigh} of order $n$ follows.

The proof also explains why the criteria in Eq.~\eqref{eq:ElbenHigh} are 
usually much weaker than the $p_n$-PPT criteria based on the Hankel matrices.  
This is because Eq.~\eqref{eq:subHankel} only contains very limited information 
of the positive semidefinite property of the Hankel matrices. Especially, when 
$n$ is even, Eq.~\eqref{eq:subHankel} holds for all states including the 
nonpositive partial transpose (NPT) ones.  Indeed, the higher-order criteria in 
Eq.~\eqref{eq:ElbenHigh} are so weak that we did not find with random sampling 
a single instance of $\rho_{AB}$ for which they are stronger than the $p_3$-PPT 
criterion, although such states can, in principle, be constructed as follows.

The key idea is to find an NPT state $\rho_{AB}$ such that
\begin{equation}
  p_3\ge p_2^2,\quad 
  \lambda_{\min}\left(\rho_{AB}^{T_A}\right)
  +\lambda_{\max}\left(\rho_{AB}^{T_A}\right)<0,
  \label{eq:counterGoal}
\end{equation}
where $p_n=\Tr[(\rho_{AB}^{T_A})^n]$, and $\lambda_{\min}(\rho_{AB}^{T_A})$ and 
$\lambda_{\max}(\rho_{AB}^{T_A})$ are the smallest and largest eigenvalues of 
$\rho_{AB}^{T_A}$, respectively. Thus, $\rho_{AB}$ does not violate the 
$p_3$-PPT criterion, but it will violate the higher-order criterion in 
Eq.~\eqref{eq:ElbenHigh} when $n$ is a large enough odd number. This is because 
when $k$ is large enough
\begin{equation}
  p_{2k+1}^{2k-1}<0<p_{2k}^{2k}.
\end{equation}
Next, we show that if there exists a state $\rho_{A_1B_1}$ in 
$\cH_{A_1}\otimes\cH_{B_1}$ satisfying that
\begin{equation}
  \lambda_{\min}\left(\rho_{A_1B_1}^{T_{A_1}}\right)
  +\lambda_{\max}\left(\rho_{A_1B_1}^{T_{A_1}}\right)<0,
  \label{eq:bigNeg}
\end{equation}
then a state $\rho_{AB}$ can be constructed by adding some noise such that the 
first part in Eq.~\eqref{eq:counterGoal}, i.e., the $p_3$-PPT criterion, is 
also satisfied. Specifically, let
$\cH_A\otimes\cH_B=\cH_{A_1A_2}\otimes\cH_{B_1B_2}$ and
\begin{equation}
  X_{AB}=\rho_{A_1B_1}\otimes\ket{00}\bra{00}_{A_2B_2}
  +2\lambda\sum_{i=1}^N\ket{00}\bra{00}_{A_1B_1}\otimes\ket{ii}\bra{ii}_{A_2B_2}
  +\lambda\sum_{i=N+1}^{2N}\ket{00}\bra{00}_{A_1B_1}\otimes\ket{ii}\bra{ii}_{A_2B_2},
\end{equation}
where $0<\lambda\le\frac{1}{2}\lambda_{\max}(\rho_{A_1B_1}^{T_{A_1}})$ and $N$ 
is an integer to be determined, then
\begin{equation}
  X_{AB}^{T_A}=\rho_{A_1B_1}^{T_{A_1}}\otimes\ket{00}\bra{00}_{A_2B_2}
  +2\lambda\sum_{i=1}^N\ket{00}\bra{00}_{A_1B_1}\otimes\ket{ii}\bra{ii}_{A_2B_2}
  +\lambda\sum_{i=N+1}^{2N}\ket{00}\bra{00}_{A_1B_1}\otimes\ket{ii}\bra{ii}_{A_2B_2}.
\end{equation}
Let
\begin{equation}
  \rho_{AB}=\frac{X_{AB}}{\Tr(X_{AB})},
\end{equation}
then it is straightforward that 
$\lambda_{\min}(\rho_{AB}^{T_A})+\lambda_{\max}(\rho_{AB}^{T_A})<0$ for any 
$N\in\dN$. Let $\tilde{p}_n=\Tr[(\rho_{A_1B_1}^{T_{A_1}})^n]$, then the 
$p_3$-PPT criterion $p_3\ge p_2^2$ for $\rho_{AB}$ is equivalent to that
\begin{equation}
  \Tr[X_{AB}]\Tr[(X_{AB}^{T_A})^3]
  \ge\left[\Tr[(X_{AB}^{T_A})^2]\right]^2
  ~\Leftrightarrow~
  2\lambda^4N^2+(9\lambda^3-10\tilde{p}_2\lambda^2+3\tilde{p}_3 
  \lambda)N+(\tilde{p}_3-\tilde{p}_2^2)\ge0,
\end{equation}
which can be satisfied by choosing a large enough $N$. Thus, we only need to 
show that there exists $\rho_{A_1B_1}$ satisfying Eq.~\eqref{eq:bigNeg}. An 
explicit example is given by the Werner states~\cite{Werner1989}
\begin{equation}
  \rho_{A_1B_1}=\frac{1}{d_1(d_1-1)}(\I-V_{A_1A_2}),
\end{equation}
where $V_{A_1B_1}$ is the swap operator between $\cH_{A_1}$ and $\cH_{B_1}$, 
and $d_1=\dim(\cH_{A_1})=\dim(\cH_{B_1})$. One can easily verify that
\begin{equation}
  \lambda_{\min}\left(\rho_{A_1B_1}^{T_{A_1}}\right)
  +\lambda_{\max}\left(\rho_{A_1B_1}^{T_{A_1}}\right)
  =-\frac{(d_1-2)}{d_1(d_1-1)},
\end{equation}
which is always negative when $d_1\ge 3$.

\section{The optimal method for the PT-moment problem}\label{app:optimal}

In this appendix, we describe in detail the optimal method for the PT-moment 
problem. As explained in the main text, this is equivalent to characterizing 
the difference between
\begin{align}
  \cT_n^+&
  ={\Big\{\vp^{(n)}=(p_0,p_1,\dots,p_n)
  \,\Big|\,\sum_{i=1}^dx_i^k=p_k,~x_i\ge 0\Big\}},\\
  \cT_n&
  ={\Big\{\vp^{(n)}=(p_0,p_1,\dots,p_n)
  \,\Big|\,\sum_{i=1}^dx_i^k=p_k,~x_i\in\dR\Big\}}.
\end{align}
To this end, we consider the following optimization problems,
\begin{equation}
  \begin{aligned}
    &\minover[x_i]/\maxover[x_i] \quad && \hat{p}_n:=\sum_{i=1}^d x_i^n\\
    &\subto && \sum_{i=1}^d x_i^k=p_k \quad ~\text{for}~ k=1,2,\dots,n-1,\\
    &       && x_1\ge x_2\ge\dots\ge x_d\ge 0.
  \end{aligned}
  \label{eq:polyoptA}
\end{equation}
Then, the moment $p_n$ should be within $[\hat{p}_n^{\min},\hat{p}_n^{\max}]$, 
which gives a necessary condition for the moments to originate from a separable 
state. Actually, as shown in the following, this condition is also sufficient 
as there is no local minimum and maximum apart from the global ones. We start 
from the case that  $n=3$ and prove Theorem~\ref{thm:p3opt} from the main text.

\begin{manualtheorem}{\ref{thm:p3opt}}
  (a) There exists a $d$-dimensional separable (PPT) state $\rho_{AB}$ 
  satisfying that $p_k=\Tr[(\rho^{T_A}_{AB})^k]$ for $k=1,2,3$, if and only if
  \begin{equation}
    p_1=1,\quad\frac{1}{d}\le p_2\le 1,\quad
    \alpha x^3+(1-\alpha x)^3\le
    p_3\le [1-(d-1)y]^3+(d-1)y^3,
  \end{equation}
  where  $\alpha=\lfloor\frac{1}{p_2}\rfloor$, 
  $x=\frac{\alpha+\sqrt{\alpha[p_2(\alpha+1)-1]}}{\alpha(\alpha+1)}$, and 
  $y=\frac{d-1-\sqrt{(d-1)(p_2d-1)}}{d(d-1)}$.\\
  (b) More importantly, suppose that the $p_k$ for $k=1,2,3$ are PT-moments 
  from a quantum state. Then, they are compatible with
  a separable (PPT) state if and only if
  \begin{equation}
    p_3\ge \alpha x^3+(1-\alpha x)^3,
  \end{equation}
  where $\alpha$ and $x$ are as above.
\end{manualtheorem}

\begin{proof}
  It is well-known that $p_1=\Tr[\rho_{AB}^{T_A}]=\Tr[\rho_{AB}]=1$, and 
  further the optimization problems in Eq.~\eqref{eq:polyoptA} for $n=3$ are 
  feasible if and only if $1/d\le p_2\le 1$. To solve the optimization problems 
  in Eq.~\eqref{eq:polyoptA} for $n=3$, we start from the simplest nontrivial 
  case that $d=3$. Then, the optimization reads
  \begin{equation}
    \begin{aligned}
      &\minover[x_i]/\maxover[x_i] \quad && \hat{p}_3:=x_1^3+x_2^3+x_3^3\\
      &\subto && x_1+x_2+x_3=p_1,\\
      &       && x_1^2+x_2^2+x_3^2=p_2,\\
      &       && x_1\ge x_2\ge x_3\ge 0,
    \end{aligned}
    \label{eq:polyoptn3d3}
  \end{equation}
  where $p_1$ and $p_2$ are constants and $\hat{p}_3$ is the objective function 
  that we want to optimize. From Eq.~\eqref{eq:polyoptn3d3} we can get how 
  $\hat{p}_3$ varies with $x_i$, i.e., the relations between the differentials 
  $\md\hat{p}_3$ and $\md x_i$,
  \begin{equation}
    \begin{alignedat}{4}
      &\md x_1+\null&&\md x_2+\null&&\md x_3&&=0,\\
      x_1&\md x_1+x_2&&\md x_2+x_3&&\md x_3&&=0,\\
      x_1^2&\md x_1+x_2^2&&\md x_2+x_3^2&&\md x_3&&
      =\frac{1}{3}\md\hat{p}_3.\\
    \end{alignedat}
    \label{eq:polyoptn3d3diff}
  \end{equation}
  This can be viewed as a system of linear equations on $\md x_i$ and can be 
  directly solved by taking advantage of Cramer's rule and the Vandermonde 
  determinant \cite{Horn.Johnson2012}, whenever $x_i$ are all different. This 
  gives the following relations between $\md\hat{p}_3$ and $\md x_i$
  \begin{equation}
    \begin{aligned}
      \md\hat{p}_3=&3(x_1-x_2)(x_1-x_3)\md x_1\\
      =&3(x_2-x_3)(x_2-x_1)\md x_2\\
      =&3(x_3-x_1)(x_3-x_2)\md x_3.\\
    \end{aligned}
    \label{eq:polyoptn3d3dp3}
  \end{equation}
  Recalling that $x_1\ge x_2\ge x_3$ by assumption, 
  Eq.~\eqref{eq:polyoptn3d3dp3} implies that $\md x_i$ are not independent and 
  an alternating relation exists between them. For example, $\md x_1>0$ will 
  result in that $\md x_2<0$ and $\md x_3>0$ (when $x_i$ are all different).  
  Further, Eq.~\eqref{eq:polyoptn3d3dp3} also implies that if the optimization 
  problems in Eq.~\eqref{eq:polyoptn3d3} are feasible, then $\md\hat{p}_3 > 0$ 
  when $x_1$ increases (and thus $x_2$ decreases and $x_3$ increases); and 
  $\md\hat{p}_3<0$ when $x_1$ decreases (and thus $x_2$ increases and $x_3$ 
  decreases). Then, without the boundary condition $x_i\ge 0$, the maximum of 
  $\hat{p}_3$ will be achieved when $x_2=x_3$ and the minimum will be achieved 
  when $x_1=x_2$.  When the boundary condition $x_3\ge 0$ is taken into 
  consideration, the minimum may also be achieved when $x_3$ decreases to zero.

  Note that the above analysis does not depend on the actual values of $p_1$ 
  and $p_2$ (even if $p_1\ne 1$). This implies that if the optimization 
  problems in Eq.~\eqref{eq:polyopt} for
  $\hat{p}_3$ ($n=3$) are feasible, the (local) maximum will be achieved only 
  if
  \begin{equation}
    \vx_3^{\max}=(x_1,\underbrace{x_2,x_2,\dots,x_2}_{d-1~\text{times}}),
    \label{eq:p3maxA}
  \end{equation}
  and the (local) minimum will be achieved only if
  \begin{equation}
    \vx_3^{\min}
    =(\underbrace{x_1,x_1,\cdots,x_1}_{\alpha~\text{times}},
    x_{\alpha+1},\underbrace{0,0,\dots,0}_{d-\alpha-1~\text{times}}).
    \label{eq:p3minA}
  \end{equation}
  This is because for the maximization any tuple $(x_{i_1},x_{i_2},x_{i_3})$ 
  with $i_1<i_2<i_3$ needs to satisfy that $x_{i_2}=x_{i_3}$, and for the 
  minimization it needs to satisfy that $x_{i_1}=x_{i_2}$ or $x_{i_3}=0$.  
  Without loss of generality, we assume that $x_{\alpha+1}\ne x_1$ in 
  Eq.~\eqref{eq:p3minA}, then the integer $\alpha$ is uniquely determined by
  \begin{equation}
    \alpha=\left\lfloor\frac{1}{p_2}\right\rfloor.
  \end{equation}
  This is because the majorization relation
  \begin{equation}
    \Big(\underbrace{\frac{1}{\alpha+1},\frac{1}{\alpha+1},\dots,
      \frac{1}{\alpha+1}}_{\alpha+1~\text{times}},
      \underbrace{
	\vphantom{\frac{1}{\alpha+1}}
    0,0,\dots,0}_{d-\alpha-1~\text{times}}\Big)
    \prec(\underbrace{x_1,x_1,\cdots,x_1}_{\alpha~\text{times}},
    x_{\alpha+1},\underbrace{0,0,\dots,0}_{d-\alpha-1~\text{times}})
    \prec\Big(\underbrace{\frac{1}{\alpha},\frac{1}{\alpha},
      \dots,\frac{1}{\alpha}}_{\alpha~\text{times}},
      \underbrace{
	\vphantom{\frac{1}{\alpha}}
    0,0,\dots,0}_{d-\alpha~\text{times}}\Big).
    \label{eq:majorization}
  \end{equation}
  and the strict Schur-convexity of $p_2=\sum_{i=1}^dx_i^2$ \cite{Bhatia1997} 
  imply that $1/(\alpha+1)< p_2\le 1/\alpha$.

  Now, from Eq.~\eqref{eq:p3maxA}, we can get that
  \begin{equation}
    \begin{aligned}
      &x_1+(d-1)x_2=1,\\
      &x_1^2+(d-1)x_2^2=p_2,\\
      &x_1\ge x_2\ge 0.
    \end{aligned}
    \label{eq:p3maxls}
  \end{equation}
  Given the feasibility condition $1/d\le p_2\le 1$, Eq.~\eqref{eq:p3maxls} has 
  a unique solution
  \begin{equation}
    x_1=\frac{\sqrt{(d-1)(p_2 d -1)}+1}{d},\quad
    x_2=\frac{d-1-\sqrt{(d-1)(p_2 d-1)}}{d(d-1)}.
    \label{eq:p3maxsol}
  \end{equation}
  From Eq.~\eqref{eq:p3minA}, we can get that
  \begin{equation}
    \begin{aligned}
      &\alpha x_1+x_{\alpha+1}=1,\\
      &\alpha x_1+x_{\alpha+1}^2=p_2,\\
      &x_1\ge x_{\alpha+1}\ge 0,
    \end{aligned}
    \label{eq:p3minls}
  \end{equation}
  which also has a unique solution
  \begin{equation}
    x_1=
    \frac{\alpha +\sqrt{\alpha[p_2(\alpha+1)-1]}}{\alpha(\alpha+1)},\quad
    x_{\alpha+1}=
    \frac{1-\sqrt{\alpha[p_2(\alpha +1)-1]}}{\alpha +1}.
    \label{eq:p3minsol}
  \end{equation}
  So far, we only considered the conditions for local extrema, and found that 
  the minimum and maximum are unique as in 
  Eqs.~(\ref{eq:p3maxsol},\,\ref{eq:p3minsol}). This implies that these are the 
  global extrema. Further, the uniqueness of the extrema and the continuity of  
  $\hat{p}_3$ also imply that the closed feasible region is connected. Thus, 
  all values between the minimum and the maximum are achievable.  All these 
  arguments lead to the optimal result in Theorem~\ref{thm:p3opt}(a).

  For Theorem~\ref{thm:p3opt}(b), as $p_1=\Tr[\rho_{AB}^{T_A}]=\Tr[\rho_{AB}]$ 
  and $p_2=\Tr[(\rho_{AB}^{T_A})^2]=\Tr[\rho_{AB}^2]$, the conditions that 
  $p_1=1$ and $1/d\le p_2\le 1$ are always satisfied by all (separable or 
  entangled) states. To show the redundancy of $p_3\le\hat{p}_3^{\max}$, we 
  need to consider the optimization problems in Eq.~\eqref{eq:polyoptA} without 
  the positivity constraints $x_i\ge 0$. From Eq.~\eqref{eq:polyoptn3d3dp3}, we 
  can see that the maximization is still achieved when $\vx$ is of the form in 
  Eq.~\eqref{eq:p3maxA}. Given the above conditions $p_1=1$ and $1/d\le p_2\le 
  1$, the solution is always positive from Eq.~\eqref{eq:p3maxsol}. Thus, we 
  prove the optimal result in Theorem~\ref{thm:p3opt}(b).
\end{proof}

We note that for the case $n=3$ the bounds in Theorem~\ref{thm:p3opt}(a) can 
also be derived from the optimization of R\'enyi/Tsallis 
entropy~\cite{Berry.Sanders2003}, but our method has the advantages that the 
refined result in Theorem~\ref{thm:p3opt}(b) is given, and more importantly, it 
can be directly generalized to higher-order optimizations in 
Eq.~\eqref{eq:polyoptA}.  We take $n=4$ as an example to illustrate this.  The 
basic idea is still to consider the simplest nontrivial situation that $d=4$ 
first, i.e.,
\begin{equation}
  \begin{aligned}
    &\minover[x_i]/\maxover[x_i] \quad && \hat{p}_4:=x_1^4+x_2^4+x_3^4+x_4^4\\
    &\subto && x_1+x_2+x_3+x_4=p_1,\\
    &       && x_1^2+x_2^2+x_3^2+x_4^2=p_2,\\
    &       && x_1^3+x_2^3+x_3^3+x_4^3=p_3,\\
    &       && x_1\ge x_2\ge x_3\ge x_4\ge 0,
  \end{aligned}
  \label{eq:polyoptn4d4}
\end{equation}
for which an analog of Eq.~\eqref{eq:polyoptn3d3diff} reads
\begin{equation}
  \begin{alignedat}{5}
    &\md x_1+\null&&\md x_2+\null&&\md x_3+\null&&\md x_4&&=0,\\
    x_1&\md x_1+x_2&&\md x_2+x_3&&\md x_3+x_4&&\md x_4&&=0,\\
    x_1^2&\md x_1+x_2^2&&\md x_2+x_3^2&&\md x_3+x_4^2&&\md x_4&&=0,\\
    x_1^3&\md x_1+x_2^3&&\md x_2+x_3^3&&\md x_3
    +x_4^3&&\md x_4&&=\frac{1}{4}\md\hat{p}_4.\\
  \end{alignedat}
  \label{eq:polyoptn4d4diff}
\end{equation}
Then, Cramer's rule and the Vandermonde determinant imply
\begin{equation}
  \begin{aligned}
    \md\hat{p}_4=&4(x_1-x_2)(x_1-x_3)(x_1-x_4)\md x_1\\
    =&4(x_2-x_3)(x_2-x_4)(x_2-x_1)\md x_2\\
    =&4(x_3-x_4)(x_3-x_1)(x_3-x_2)\md x_3\\
    =&4(x_4-x_1)(x_4-x_2)(x_4-x_3)\md x_4.\\
  \end{aligned}
\end{equation}
With a similar argument as for Eq.~\eqref{eq:polyoptn3d3dp3}, we get that when 
the optimization problems in Eq.~\eqref{eq:polyoptn4d4} are feasible, the 
maximum is achieved when $x_2=x_3$ or $x_4=0$, and the minimum is achieved when 
$x_1=x_2$ or $x_3=x_4$. This analysis implies that if the optimization problems 
in Eq.~\eqref{eq:polyopt} for $\hat{p}_4$ are feasible i.e., all the conditions 
in Theorem~\ref{thm:p3opt}(a) are satisfied, the maximum will be achieved only 
if
\begin{equation}
  \vx_4^{\max}=(x_1,\underbrace{x_2,x_2,\dots,x_2}_{\beta~\text{times}},
  x_{\beta+2},\underbrace{0,0,\dots,0}_{d-\beta-2~\text{times}}),
  \label{eq:p4maxA}
\end{equation}
and the minimum will be achieved only if
\begin{equation}
  \vx_4^{\min}
  =(\underbrace{x_1,x_1\cdots,x_1}_{\gamma~\text{times}},x_{\gamma+1},
    \underbrace{x_{\gamma+2},x_{\gamma+2},\dots,x_{\gamma+2}}_{d-\gamma-1
  ~\text{times}}).
  \label{eq:p4minA}
\end{equation}
Furthermore, the vectors $\vx_4^{\max}$ and $\vx_4^{\min}$ are uniquely 
determined.  The argument for the uniqueness is easy but tedious.  Here, we 
only take $\vx_4^{\max}$ as an example to show the basic idea; the uniqueness 
of $\vx_4^{\min}$ can be proved similarly. The task is to prove that the 
equations
\begin{equation}
  \begin{aligned}
    & x_1+\beta x_2+x_{\beta+2}=1,\\
    & x_1^2+\beta x_2^2+x_{\beta+2}^2=p_2,\\
    & x_1^3+\beta x_2^3+ x_{\beta+2}^3=p_3,\\
    & x_1\ge x_2\ge x_{\beta+2}\ge 0.
  \end{aligned}
  \label{eq:maxvar4}
\end{equation}
uniquely determine $\beta,x_1,x_2,x_{\beta+2}$. To this end, we fix $p_2$ but 
treat $p_3$ as a function on $\beta,x_1,x_2,x_{\beta+2}$. We aim to show that 
that $p_3$ is monotonically increasing on $x_1$ (for fixed $p_2$) and thus 
$p_2,p_3$ uniquely determine $x_1$. Then, as in Eq.~\eqref{eq:p3minA}, for 
fixed $p_2$ and $x_1$, the equations
\begin{equation}
  \begin{aligned}
    \beta x_2+x_{\beta+2}=1-x_1,\\
    \beta x_2^2+x_{\beta+2}^2=p_2-x^2_1,\\
  \end{aligned}
\end{equation}
uniquely determine $\beta$, $x_2$, and $x_{\beta+2}$ and hence, given the 
feasibility, $\vx_4^{\max}$ is uniquely determined by $p_2$ and $p_3$.

In the following, we always assume that $p_2$ is fixed. Similarly to 
Eq.~\eqref{eq:polyoptn3d3dp3}, we can show that for fixed $\beta$, $\md p_3>0$ 
if $\md x_1>0$. Then, we only need to show that there is no overlap between the 
ranges of $p_3$ for different $\beta$ (except the extrema). To this end, we 
apply a similar argument as in 
Eqs.~(\ref{eq:polyoptn3d3},\,\ref{eq:polyoptn3d3diff},\,\ref{eq:polyoptn3d3dp3})
to the following optimization for fixed $\beta$,
\begin{equation}
  \begin{aligned}
    &\minover[x_i]/\maxover[x_i] \quad && p_3:=x_1^3+\beta 
    x_2^3+x_{\beta+2}^3\\
    &\subto && x_1+\beta x_2+x_{\beta+2}=1,\\
    &       && x_1^2+\beta x_2^2+x_{\beta+2}^2=p_2,\\
    &       && x_1\ge x_2\ge x_{\beta+2}\ge 0.
  \end{aligned}
\end{equation}
One can show that the minimization is achieved when $x_1=x_2$ or 
$x_{\beta+2}=0$, and the maximization is achieved when $x_2=x_{\beta+2}$. When 
these optimization results are written down consecutively for all possible 
$\beta$, i.e., from $\beta=\alpha-1=\lfloor 1/p_2\rfloor-1$ to $\beta=d-2$, one 
can easily see that $p_3$ is monotonically increasing in following process 
($\nearrow$ denotes increasing and $\searrow$ denotes decreasing)
\begin{equation}
  \begin{aligned}
    &(x_1^\nearrow,\underbrace{x_1^\searrow,
      x_1^\searrow,\dots,x_1^\searrow}_{\alpha-1~\text{times}},
    x_{\alpha+1}^\nearrow,\underbrace{0,0,\dots,0}_{d-\alpha-1~\text{times}})
    &\Rightarrow~~&(x_1^\nearrow,\underbrace{x_2^\searrow,
      x_2^\searrow,\dots,x_2^\searrow}_{\alpha-1~\text{times}},
    x_{\alpha+1}^\nearrow,\underbrace{0,0,\dots,0}_{d-\alpha-1~\text{times}})\\
    \Rightarrow~&(x_1^\nearrow,\underbrace{x_2^\searrow,
      x_2^\searrow,\dots,x_2^\searrow}_{\alpha~\text{times}},
    0^\nearrow,\underbrace{0,0,\dots,0}_{d-\alpha-2~\text{times}})
    &\Rightarrow~~&(x_1^\nearrow,\underbrace{x_2^\searrow,
      x_2^\searrow,\dots,x_2^\searrow}_{\alpha~\text{times}},
    x_{\alpha+2}^\nearrow,\underbrace{0,0,\dots,0}_{d-\alpha-2~\text{times}})\\
    \vdots\quad&\\
    \Rightarrow~&(x_1^\nearrow,\underbrace{x_2^\searrow,
      x_2^\searrow,\dots,x_2^\searrow}_{\beta~\text{times}},
    0^\nearrow,\underbrace{0,0,\dots,0}_{d-\beta-2~\text{times}})
    &\Rightarrow~~&(x_1^\nearrow,\underbrace{x_2^\searrow,
      x_2^\searrow,\dots,x_2^\searrow}_{\beta~\text{times}},
    x_{\beta+2}^\nearrow,\underbrace{0,0,\dots,0}_{d-\beta-2~\text{times}})\\
    \Rightarrow~&(x_1^\nearrow,\underbrace{x_2^\searrow,
      x_2^\searrow,\dots,x_2^\searrow}_{\beta+1~\text{times}},
    0^\nearrow,\underbrace{0,0,\dots,0}_{d-\beta-3~\text{times}})
    &\Rightarrow~~&(x_1^\nearrow,\underbrace{x_2^\searrow,
      x_2^\searrow,\dots,x_2^\searrow}_{\beta+1~\text{times}},
    x_{\beta+3}^\nearrow,\underbrace{0,0,\dots,0}_{d-\beta-3~\text{times}})\\
    \vdots\quad&\\
    \Rightarrow~&(x_1^\nearrow,\underbrace{x_2^\searrow,
      x_2^\searrow,\dots,x_2^\searrow}_{d-2~\text{times}},
    x_d^\nearrow)
    &\Rightarrow~~&(x_1,\underbrace{x_2, x_2,\dots,x_2}_{d-1~\text{times}}).
  \end{aligned}
  \label{eq:p4maxmon}
\end{equation}
Thus, $p_3$ is monotonically increasing on $x_1$ and there is no overlap 
between the ranges of $p_3$ for different $\beta$ (except 
the extrema)

The above arguments also provide a method to determine $\beta$ and then 
$\vx_4^{\max}$ completely. Let $\beta_0$ be the unique solution of the 
following equations
\begin{equation}
  \begin{aligned}
    & x_1+\beta_0x_2=1,\\
    & x_1^2+\beta_0x_2^2=p_2,\\
    & x_1^3+\beta_0x_2^3=p_3,\\
    & x_1\ge x_2\ge 0,~\beta_0\ge 1,
  \end{aligned}
  \label{eq:estbetaA}
\end{equation}
then $\beta=\lfloor\beta_0\rfloor$ and the maximum point $\vx_4^{\max}$ in 
Eq.~\eqref{eq:p4maxA} can be obtained by solving Eq.~\eqref{eq:maxvar4}.

Similarly, let $\gamma_0$ be the unique solution of the following equations 
(unless $p_k=1/d^{k-1}$)
\begin{equation}
  \begin{aligned}
    & \gamma_0x_1+(d-\gamma_0)x_2=1,\\
    & \gamma_0x_1^2+(d-\gamma_0)x_2^2=p_2,\\
    & \gamma_0x_1^3+(d-\gamma_0)x_2^3=p_3,\\
    & x_1\ge x_2\ge 0,~\gamma_0\ge 1.
  \end{aligned}
  \label{eq:estgammaA}
\end{equation}
then $\gamma=\lfloor\gamma_0\rfloor$ and minimum point $\vx_4^{\min}$ in 
Eq.~\eqref{eq:p4minA} can be obtained by solving
\begin{equation}
  \begin{aligned}
    & \gamma x_1+x_{\gamma+1}+(d-\gamma-1)x_{\gamma+2}=1,\\
    & \gamma x_1^2+x_{\gamma+1}^2+(d-\gamma-1)x_{\gamma+2}^2=p_2,\\
    & \gamma x_1^3+x_{\gamma+1}^3+(d-\gamma-1)x_{\gamma+2}^3=p_3,\\
    & x_1\ge x_{\gamma+1}\ge x_{\gamma+2}\ge 0.
  \end{aligned}
  \label{eq:minvar4}
\end{equation}
Thus, given the feasibility, i.e., all the conditions in 
Theorem~\ref{thm:p3opt}(a) are satisfied, the necessary and sufficient 
condition for $\rho_{AB}$ can be separable (PPT) is
\begin{equation}
  \hat{p}_4^{\min}\le p_4\le \hat{p}_4^{\max},
  \label{eq:p4optA}
\end{equation}
where $\hat{p}_4^{\min}$ and $\hat{p}_4^{\max}$ are determined by  
$\vx_4^{\min}$ in  Eq.~\eqref{eq:p4minA} and $\vx_4^{\max}$ in 
Eq.~\eqref{eq:p4maxA}, respectively. Correspondingly, we can show that the 
lower bound is redundant because $\vx_4^{\min}$ is also the minimum without the 
positivity constraints $x_i\ge 0$.
An intuitive way to understand why $\vx_3^{\max}$ and $\vx_4^{\min}$ are 
redundant is that they are extrema that are not on the boundary of nonnegative 
vectors.

With an analogous procedure, we can also solve the optimization problems in 
Eq.~\eqref{eq:polyoptA} for $n=5$, where the maximum will be achieved only if
\begin{equation}
  \vx_5^{\max}=(x_1,
    \underbrace{x_2,x_2,\dots,x_2}_{\kappa~\text{times}},x_{\kappa+2},
  \underbrace{x_{\kappa+3},x_{\kappa+3},\dots,x_{\kappa+3}}_{d-\kappa-2~\text{times}}),
  \label{eq:p5maxA}
\end{equation}
and the minimum will be achieved only if
\begin{equation}
  \vx_5^{\min}=(\underbrace{x_1,x_1,\cdots,x_1}_{\eta~\text{times}},
    x_{\eta+1},\underbrace{x_{\eta+2},
    x_{\eta+2}\dots,x_{\eta+2}}_{\xi~\text{times}},
    x_{\eta+\xi+2},
  \underbrace{0,0,\dots,0}_{d-\eta-\xi-2~\text{times}}).
  \label{eq:p5minA}
\end{equation}
In this case, it is more complicated to determine $\vx^{\max}_5$ and 
$\vx^{\min}_5$. Instead of writing down the complicated formula, we provide the 
Mathematica code for performing the optimizations.

\newpage
\section{Additional numerical results}\label{app:numerical}

\begin{figure}[h!]
  \centering
  \includegraphics[width=.4\textwidth]{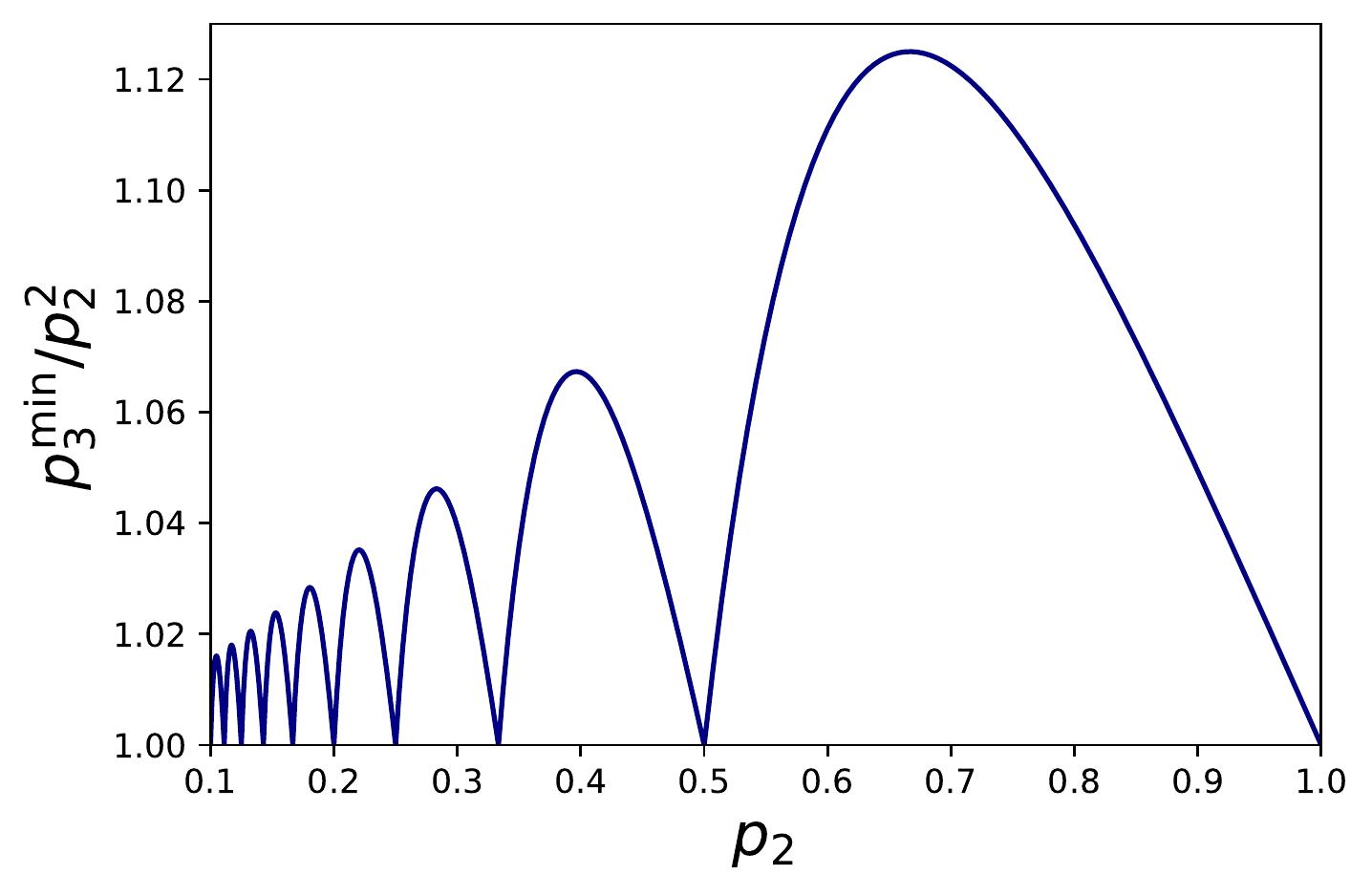}
  \caption{%
    An illustration of the difference between the optimal 
    criterion $p_3$-OPPT in Theorem~\ref{thm:p3opt}(b) and 
    the $p_3$-PPT condition $p_3\ge p_2^2$ in Ref.~\cite{Elben.etal2020}. This 
    shows that the violation of the optimal criterion can be up to $12.5\%$ 
    larger.
  }
  \label{fig:p3ratio}
\end{figure}

\begin{table}[h!]
  \centering
  \begin{tabular}{|c||c|c|c|c|c|c|c|}
    \hline
    $D$ & NPT & NPT$3$ & ONPT$3$ & ONPT$4$ & NPT$5$\\
    \hline \hline
    10 &100\% & 19.54\% & 29.74\% & 100\% & 100\%\\
    \hline
    20 &100\% & 18.81\% & 29.10\% & 100\% & 100\%\\
    \hline
    30 &100\% & 18.51\% & 28.83\% & 100\% & 100\%\\
    \hline
    40 &100\% & 18.50\% & 28.92\% & 100\% & 100\%\\
    \hline \hline
  \end{tabular}
  \caption{%
    Fraction of (large) $D\times D$ states in the Hilbert-Schmidt distribution 
    (100,000 samples) that can be detected with the various criteria. Here, NPT 
    denotes the states violating the PPT criterion, NPT$n$ (NPT$3$, NPT$5$) 
    denote the states violating the $p_n$-PPT criterion in 
    Eq.~\eqref{eq:pnPPT}, and ONPT$n$ (ONPT$3$, ONPT$4$) denote the states 
    violating the $p_n$-OPPT criterion.
    Note that in Tables~\ref{tab:HS} and \ref{tab:HSx10}, the difference 
    between $100\%$ and $100.00\%$ is that the former is an exact value, while 
    the latter is an approximate value.
    In addition, see Ref.~\cite{Szymanski.etal2017} for an asymptotic behavior 
    ($D\to+\infty$) of the moments for all PPT states.
  }
  \label{tab:HSx10}
\end{table}

\section{Statistical analysis}\label{app:statistics}

In Refs.~\cite{Elben.etal2020,Huang.etal2020}, it was already shown that the 
PT-moments $p_n=\Tr[(\rho_{AB}^{T_A})^n]$ can be efficiently measured with the 
classical shadows and $U$-statistics. The statistical error is also explicitly 
analyzed for $n\le 3$. The extension the higher-order case is straightforward 
but tedious. To obtain a general result for arbitrary $n$, we focus on the high 
accuracy limit (i.e., the error $\varepsilon\ll 1$), and show that for any 
$N$-qubit state and an arbitrary bipartition $(A|B)$, a total of
\begin{equation}
  M\sim\frac{n^22^Np_2^{n-1}}{\varepsilon^2\delta}
  \label{eq:sample}
\end{equation}
copies of the state suffice to ensure that the estimated $\hat{p}_n$ obeys 
$\abs{\hat{p}_n-p_n}\le\varepsilon$ with probability at least $1-\delta$.  The 
$n^2p_2^{n-1}$ scaling in Eq.~\eqref{eq:sample} implies that measuring the 
higher-order PT-moments does not take many more measurements compared to the 
lower-order ones; moreover, it also shows that for fixed $N,n,\delta$ the error 
$\varepsilon$ is proportional to $p_2^{(n-1)/2}$.  This property is very 
important for tackling the problem that higher-order PT-moments decrease 
exponentially, i.e., $p_n\le p_2^{n/2}$ discussed in the main text.

Equation~\eqref{eq:sample} follows directly from the statistical analysis in 
Appendix~D of Ref.~\cite{Elben.etal2020}. For simplicity, we do not restate the 
data acquisition and processing protocol, but refer the readers to 
Ref.~\cite{Elben.etal2020,Huang.etal2020} for the details of the classical 
shadow formalism. Also, we employ similar notations to make the proof easier to 
follow. Similarly to Eqs.~(D16,\,D22) in Ref.~\cite{Elben.etal2020}, one can 
easily show that the variance of the estimator $\hat{p}_n$ reads
\begin{align}
  \label{eq:Varpn1}
  \Var(\hat{p}_n)=&{\binom{M}{n}}^{-1}\binom{n}{1}\binom{M-n}{n-1}
  \Var\left[\Tr\left(\left(\rho_{AB}^{T_A}\right)^{n-1}
  \hat{\rho}_{AB}^{T_A}\right)\right] + \cO\left(\frac{1}{M^2}\right)\\
  \label{eq:Varpn2}
  \le&\frac{n^2}{M}L + \cO\left(\frac{1}{M^2}\right),
\end{align}
where $L$ denotes the linear contribution
\begin{equation}
  L=\Var\left[\Tr\left(\left(\rho_{AB}^{T_A}\right)^{n-1} 
  \hat{\rho}_{AB}^{T_A}\right)\right],
\end{equation}
and its coefficient results from
\begin{equation}
  \frac{\binom{n}{1}\binom{M-n}{n-1}}{\binom{M}{n}}
  =\frac{\binom{n}{1}\binom{M-n}{n-1}}{\frac{M}{n}\binom{M-1}{n-1}}
  \le\frac{\binom{n}{1}}{\frac{M}{n}}=\frac{n^2}{M}.
  \label{eq:coeff}
\end{equation}
The high accuracy limit $\varepsilon\ll 1$ also implies that $M\gg 1$, thus we 
can ignore the higher-order terms in Eq.~\eqref{eq:Varpn2}. To estimate the 
linear contribution $L$, we set
\begin{equation}
  O=\left(\rho_{AB}^{T_A}\right)^{n-1},
\end{equation}
then
\begin{equation}
  L=\Var\left[\Tr\left(O\hat{\rho}_{AB}^{T_A}\right)\right]
  =\Var\left[\Tr\left(O^{T_A}\hat{\rho}_{AB}\right)\right]
  \le 2^N\Tr\big[(O^{T_A})^2\big],
\end{equation}
where the inequality follows from Eq.~(D7) in Ref.~\cite{Elben.etal2020}.  
Further,
\begin{equation}
  \Tr\big[(O^{T_A})^2\big]=\Tr\big[O^2\big]
  =\Tr\left[(\rho_{AB}^{T_A})^{2(n-1)}\right]
  \le\left[\Tr\big[(\rho_{AB}^{T_A})^{2}\big]\right]^{n-1}
  =p_2^{n-1},
\end{equation}
where we have used the relation that $\Tr(X^{n-1})\le[\Tr(X)]^{n-1}$ for any 
$X\ge 0$. Thus, we get that
\begin{equation}
  \Var(\hat{p}_n)\le\frac{n^22^Np_2^{n-1}}{M}+\cO\left(\frac{1}{M^2}\right).
\end{equation}
Then, Eq.~\eqref{eq:sample} follows directly from the Chebyshev inequality
\begin{equation}
  P\big(\abs{\hat{p}_n-p_n}\ge\varepsilon\big)
  \le\frac{\Var(\hat{p}_n)}{\varepsilon^2}.
\end{equation}

\twocolumngrid
\bibliography{QuantumInf}

\end{document}